\providecommand{\customgenericname}{}
\newcommand{\newcustomtheorem}[2]{
  \newenvironment{#1}[1]
  {
   \renewcommand\customgenericname{#2}
   \renewcommand\theinnercustomgeneric{##1}
   \innercustomgeneric
  }
  {\endinnercustomgeneric}
}
\newtheorem{theorem}{Theorem} 
 \newtheorem{definition}{Definition}
 \newtheorem{protocol}{Protocol}
\newcommand{\ket}[1]{|#1\rangle}
\newcommand{\codepar}[1]{\ensuremath{[\![#1]\!]}}
\definecolor{azure}{rgb}{0.0, 0.5, 1.0}
\definecolor{blue-green}{rgb}{0.0, 0.67, 0.57}
\definecolor{neoncarrot}{rgb}{1.0, 0.64, 0.26}
\crefname{equation}{Eq.\!}{Eqs.\!}
\crefname{figure}{Fig.\!}{Figs.\!}
\mathchardef\mhyphen="2D
\begin{document}

\title{Optimization tools for distance-preserving flag fault-tolerant error correction}

\author{Balint Pato}
\email{balint.pato@duke.edu}
\thanks{contributed equally}
\affiliation{
	Duke Quantum Center, Duke University, Durham, NC 27701, USA
}
\affiliation{
	Department of Electrical and Computer Engineering, Duke University, Durham, NC 27708, USA
}
\author{Theerapat Tansuwannont}
\email{t.tansuwannont.qiqb@osaka-u.ac.jp}
\thanks{contributed equally}
\thanks{present address: Center for Quantum Information and Quantum Biology, Osaka University, Toyonaka, Osaka 560-0043, Japan.}
\affiliation{
	Duke Quantum Center, Duke University, Durham, NC 27701, USA
}
\affiliation{
	Department of Electrical and Computer Engineering, Duke University, Durham, NC 27708, USA
}
\author{Shilin Huang}

\thanks{Present address: Department of Applied Physics, Yale University, New Haven, CT 06511, USA.}
\affiliation{
	Duke Quantum Center, Duke University, Durham, NC 27701, USA
}
\affiliation{
	Department of Electrical and Computer Engineering, Duke University, Durham, NC 27708, USA
}

\author{Kenneth R. Brown}
\email{ken.brown@duke.edu}
\affiliation{
	Duke Quantum Center, Duke University, Durham, NC 27701, USA
}
\affiliation{
	Department of Electrical and Computer Engineering, Duke University, Durham, NC 27708, USA
}
\affiliation{
	Department of Physics, Duke University, Durham, NC 27708, USA
}
\affiliation{
	Department of Chemistry, Duke University, Durham, NC 27708, USA
}

\begin{abstract}

	Lookup table decoding is fast and distance-preserving, making it attractive for near-term quantum computer architectures with small-distance quantum error-correcting codes. In this work, we develop several optimization tools that can potentially reduce the space and time overhead required for flag fault-tolerant quantum error correction (FTQEC) with lookup table decoding on Calderbank-Shor-Steane (CSS) codes. Our techniques include the compact lookup table construction, the Meet-in-the-Middle technique, the adaptive time decoding for flag FTQEC, the classical processing technique for flag information, and the separated $X$ and $Z$ counting technique. We evaluate the performance of our tools using numerical simulation of hexagonal color codes of distances 3, 5, 7, and 9 under circuit-level noise. Combining all tools can result in more than an order of magnitude increase in pseudothreshold for the hexagonal color code of distance 9, from $(1.34 \pm 0.01) \times 10^{-4}$ to $(1.42 \pm 0.12) \times 10^{-3}$.

\end{abstract}

%\pacs{03.67.Pp}

\maketitle

\section{Introduction}

Inside a future large-scale quantum computer, there will be a continuous battle against unwanted interactions with the environment. The main goal of fault-tolerant quantum error correction (FTQEC) protocols \cite{Shor96} is to create a robust channel to transfer quantum information from the past to the future. The threshold theorem states that it is possible to suppress the failure rate of this channel (the logical error rate) to an arbitrarily small value given that the physical error rate of the constituent operations are below the accuracy threshold \cite{Knill98, K97,Preskill98,AB08,AGP06,AKP06}. It is essential to reduce both space and time overhead (the numbers of qubits and gates) for scalable quantum computing as decreasing logical error rates requires increasing overhead \cite{Steane03,PR12,CJL16b,TYC17}, and the current, leading proposals for FTQEC schemes have daunting requirements \cite{GE21,BMTSHKLSSV22}.

An FTQEC scheme is designed to be robust against propagating errors that emerge from faulty gates during the execution of the protocol \cite{Shor96}. The scheme also has to protect against ancilla preparation and measurement errors, usually through repeated syndrome measurements. For an \codepar{n,k,d} stabilizer code \cite{Gottesman97} which encodes $k$ logical qubits into $n$ physical qubits and has minimum distance $d$, Shor's solution \cite{Shor96} was to utilize cat-state ancilla register that requires $w$ ancilla qubits and $(d+1)^2/4$ rounds of syndrome measurements, where $w$ is the maximum weight of the stabilizer generators. In Steane-style syndrome extraction \cite{Steane96b}, the ancilla register requires $n$ qubits and is encoded with the same quantum error correcting code (QECC) as the data qubits. Similarly, in Knill-style error correction \cite{Knill05a}, the ancilla register consists of two blocks of $n$ qubits encoded in the same QECC as the data qubits.

In contrast to complex ancilla structures, bare ancillas can also be used to fault-tolerantly extract the syndrome while preserving the minimum distance for some specific families of stabilizer codes \cite{TS14, HB20, LGDHB17} and subsystem codes \cite{BDPS13,LMB18,HiggottB21}, or by tolerating some loss of distance \cite{LAR11,BKS21,ZWG23}. For a general stabilizer code, however, generator measurements with bare ancillas might not be possible.
A series of works aiming to reduce the size of the ancilla register resulted in increasingly lighter-weight constructions \cite{DA07,YK17}, which also led to the \textit{flag FTQEC} schemes for perfect codes of distance three that use only two ancillas per generator \cite{CR17a}, one flag-qubit and one syndrome qubit. The flag schemes later generalized to arbitrary codes of distance $d$ require $d+1$ ancillas per generator \cite{CR20}, while the schemes for some specific families of codes require fewer \cite{CB18,TCL20,CKYZ20,CZYHC20,TL21,TL22}.

FTQEC schemes based on extraction circuits with cat states and flag FTQEC schemes both require repetition of syndrome measurements that can result in a large number of gates. Adaptive syndrome measurement schemes in which the subsequent measurement procedures depend on the previous syndrome measurement outcomes have been explored to reduce the number of rounds required for FTQEC schemes with Shor-type extraction circuits \cite{Zalka96, DR20, TPB23}.

During the execution of an FTQEC protocol, faults can occur at any gate on any round of the syndrome measurements. The only information about the error on data qubits that we can obtain is a sequence of error syndromes, and we want to find an appropriate recovery operator from this information. An ideal strategy would be using all syndrome bits from all rounds, that is, the whole measurement outcomes in space-time. For some codes with a nice structure, such as surface codes, an efficient space-time decoder exists \cite{DKLP02}. However, constructing a space-time decoder for a general stabilizer code is not simple. To simplify the problem, we will consider an error decoder, which is composed of two parts: the space and the time decoders. Under the assumption that the syndrome measurements can be faulty, the time decoder finds a round of syndrome measurements that has no faults and gives a correct syndrome. The space decoder then uses the correct syndrome to construct a recovery operator.

Conventionally, flag FTQEC uses a lookup table decoder as a space decoder and relies on Shor-style repeated syndrome measurements as a time decoder, although there are instances where this is not the case \cite{CKYZ20_v3}. These decoders have pros and cons. The lookup table decoder is fast and distance-preserving. However, building a lookup table requires an exhaustive search over all possible fault combinations up to a certain number of faults, and the table requires a lot of memory to store. Thus, it might not work well with a code of high distance (unless code concatenation is applied). The Shor-style time decoder is simple and compatible with any space decoder. However, the large time overhead required in the repetition can result in a lower threshold.

In this work, we build several optimization tools for both space and time decoders for the purpose of reducing the overhead of both to obtain better-performing protocols for flag qubit-based FTQEC.
Most of our tools are applicable to general stabilizer codes, but we primarily focus on self-orthogonal CSS codes (CSS codes in which $X$- and $Z$-type generators are of the same form) in which the number of physical qubits is odd, the number of logical qubits is 1, and logical $X$ and $Z$ operators are transversal for simplicity. Our main results are the following: (1) We develop a technique to build a lookup table more efficiently. Our compact lookup table can leverage the structure of a self-orthogonal CSS code and requires 87.5\% less memory footprint compared to a lookup table designed for a generic stabilizer code. Our method also efficiently verifies whether a configuration of the flag circuits preserves the code distance. The development also leads to a notion of \textit{fault code}, which can be useful in error sampling for the circuit-level noise model. (2) We introduce the Meet-in-the-Middle (MIM) technique, which can help the lookup table decoder correct faults more than the number of errors correctable by the underlying code. Although the correction is not always successful, the higher success probability can significantly increase the pseudothreshold in our simulations. (3) We generalize previous work \cite{TPB23} on adaptive syndrome measurement schemes to flag FTQEC and introduce one-tailed and two-tailed adaptive time decoders, which are useful in different circumstances. We also develop a classical processing technique on flag information that makes our FTQEC protocols compatible with any fault-tolerant Clifford computation. (4) We use our optimization tools and perform numerical simulations on the hexagonal color codes \cite{BM06} of distances 3, 5, 7, and 9. The results show that each of our tools can significantly reduce the logical error rates and increase the pseudothreshold for each code while preserving the code distance. For the hexagonal color code of distance 9, the pseudothreshold is improved by one order of magnitude, from $(1.34\pm 0.01) \times 10^{-4}$ to $(1.42 \pm 0.12) \times 10^{-3}$, when all techniques are applied.

This paper is organized as follows. In \cref{sec:background}, we define the noise model in this work, review flag FTQEC, and provide definitions of fault-tolerant error correction. In \cref{sec:space_decoder}, we develop optimization tools for space decoder, including an efficient method to build a compact lookup table and the MIM technique. In \cref{sec:time_decoder}, we develop optimization tools for time decoder, including the one-tailed and two-tailed adaptive time decoder, and other extended techniques for CSS codes. In \cref{sec:numerical results}, we provide numerical results for the hexagonal color codes and observe the effects of the MIM, the adaptive time decoding, and the separated $X$ and $Z$ counting techniques on the logical error rates. We discuss and conclude our results in \cref{sec:discussions}.

\section{Background}  \label{sec:background}

Quantum systems are fragile and can lose their properties easily when interacting with the environment. To protect quantum information, one can use a QECC to encode the quantum data. Quantum error correction (QEC) is a process that identifies an error when it occurs  and then applies an appropriate error correction (EC) operator to remove the error. However, quantum operations in the process can be faulty and may introduce more errors to the system. For this reason, we want to make sure that the QEC process is \emph{fault tolerant}, which provides robustness guarantees against the impact of noise on the error correction implementations.

In this section, we first describe the noise model that will be used in this work and provide the conventional definition of fault-tolerant error correction in \cref{subsec:FTQEC_def}. We then review flag FTQEC and provide a revised definition of fault-tolerant error correction, which is more suitable for flag FTQEC in \cref{subsec:flag_FTQEC}.

\subsection{Noise model and conventional definition of fault-tolerant error correction}
\label{subsec:FTQEC_def}

An \codepar{n,k,d} stabilizer code \cite{Gottesman97} encodes $k$ logical qubits using $n$ physical qubits and can correct up to $\tau=\lfloor (d-1)/2 \rfloor$ errors, where $d$ is the code distance. A stabilizer code is described by a \emph{stabilizer group}, an Abelian group generated by $r=n-k$ commuting Pauli operators whose elements are called stabilizers. The code space is the simultaneous +1 eigenspace of all elements in the stabilizer group.

The QEC process for a stabilizer code can be done by first measuring the eigenvalues of all stabilizer generators. An $r$-bit string of measurement outcomes is called the error syndrome (where bits 0 and 1 refer to $+1$ and $-1$ eigenvalues of each generator). An example of a circuit for measuring an eigenvalue of a stabilizer generator is displayed in \cref{fig:bare}. After the syndrome is obtained, an appropriate recovery operator will be found by a mapping called \emph{error decoder}. Finally, the recovery operator will be applied to the data qubits. For Calderbank-Shor-Steane (CSS) codes \cite{CS96,Steane96b}, it is possible to correct $X$- and $Z$-type errors separately. In this work, we follow \textit{standard CSS decoding} \cite{DT14}, meaning independent recovery for $X$- and $Z$-type errors, thus not taking the effect of $X/Z$ correlations like $Y$ errors into account.

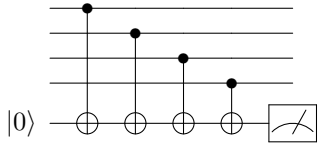
\begin{figure}[tbp]
	\begin{equation}
		\Qcircuit @C=1em @R=.7em {
		& \ctrl{4} & \qw & \qw & \qw & \qw & \\
		& \qw & \ctrl{3} & \qw & \qw & \qw & \\
		& \qw & \qw & \ctrl{2} & \qw & \qw & \\
		& \qw  & \qw & \qw & \ctrl{1} & \qw & \\
		\lstick{\ket{0}} & \targ & \targ & \targ & \targ & \meter
		} \nonumber
	\end{equation}
	\caption{A syndrome extraction circuit with bare ancilla for measuring a stabilizer generator of the form $ZZZZ$.}
	\label{fig:bare}
\end{figure}

If all gates in the syndrome measurement process (with an example circuit in \cref{fig:bare})  are perfect, a stabilizer code of distance $d$ should be able to correct up to $\tau$ errors as desired. However, the process above may not be fault-tolerant under the circuit-level depolarizing noise. This is because a single faulty gate may lead to an error that can propagate to multiple errors on the data qubits, often referred to as hook errors \cite{DKLP02}. These errors can always be handled by complex ancilla \cite{Shor96,Steane97,HB21} or flag circuits \cite{CR17a} and sometimes handled by the circuit order \cite{TS14, HB20, LGDHB17,LGDHB17}.

In this work, we use the \emph{circuit-level depolarizing noise model}.  After each gate, a \textit{fault} occurs on the support of the gate. Every single-qubit gate is followed by a single-qubit Pauli operator $P \in \{X,Y,Z\}$ with probability $p/3$ each, and every two-qubit gate is followed by a two-qubit Pauli operator $P_1\otimes P_2 \in \{I,X,Y,Z\}^{\otimes 2} \setminus \{I\otimes I\}$ with probability $p/15$ each. In addition, a single-qubit preparation and measurement can also be faulty; this is modeled be a bit-flip channel after a single-qubit preparation or before a single-qubit measurement with error probability $p$.

One way to define FTQEC is by using the definition proposed by Aliferis, Gottesman, and Preskill:

\begin{definition}{Fault-tolerant error correction \cite{AGP06}}

	Let $t \leq \lfloor (d-1)/2\rfloor$ where $d$ is the distance of a stabilizer code. An error correction protocol is \emph{$t$-fault tolerant} if the following two conditions are satisfied:
	\begin{enumerate}
		\item Error correction correctness property (ECCP): For any input codeword with error of weight $r$, if $s$ faults occur during the protocol with $r+s \leq t$, ideally decoding the output state gives the same codeword as ideally decoding the input state.
		\item Error correction recovery property (ECRP): If $s$ faults occur during the protocol with $s \leq t$, regardless of the weight of the error on the input state, the output state differs from any valid codeword by an error of weight at most $s$.
	\end{enumerate}
	\label{def:FT_AGP}
\end{definition}

When a QEC protocol satisfies \cref{def:FT_AGP}, it is guaranteed that the output error will have weight $\leq t$ whenever the weight of the input error plus the total number of faults in the protocol is $\leq t$. This means that if the next round of QEC has no faults, it can always correct the output error from the current round. Normally, we would like to construct an FTQEC protocol in which $t$ is as close to $\tau=\lfloor (d-1)/2\rfloor$. If $t=\tau$, we say that \emph{the FTQEC protocol preserves the code distance}.

\subsection{Flag technique and revised definition of fault-tolerant error correction}
\label{subsec:flag_FTQEC}

Before describing the flag technique for FTQEC, let us consider a well-known Shor FTQEC \cite{Shor96} applied to a stabilizer code of distance $d$. In this scheme, a stabilizer generator of weight $w$ is measured using a cat state of the form $\frac{1}{\sqrt{2}}(\ket{0}^{\otimes w}+\ket{1}^{\otimes w})$ and transversal CNOT gates; see \cref{fig:Shor}. A circuit of this kind will be called a \emph{Shor syndrome extraction circuit}. When the cat state is prepared fault-tolerantly, a single fault in the circuit can lead to an error of weight no more than one on the data qubits, so the set of all possible errors arising from up to $t$ faults is exactly the same as a set of all possible errors on $\leq t$ qubits in this case. Therefore, any syndrome can uniquely identify the error (up to a multiplication of some stabilizer) when the number of faults in the protocol is  $\leq t$.

\begin{figure}[tbp]
	\begin{equation}
		\ \ \ \ \ \ \ \ \ \ \ \Qcircuit @C=1em @R=.7em {
		& \qw & \ctrl{4} & \qw & \qw & \qw & \qw & \qw  \\
		& \qw & \qw & \ctrl{4} & \qw & \qw & \qw & \qw  \\
		& \qw & \qw & \qw & \ctrl{4} & \qw & \qw & \qw  \\
		& \qw & \qw  & \qw & \qw & \ctrl{4} & \qw & \qw  \\
		& \gate{H} & \targ & \qw & \qw & \qw & \qw & \meter \\
		& \gate{H} & \qw & \targ & \qw & \qw & \qw & \meter \\
		& \gate{H} & \qw & \qw & \targ & \qw & \qw & \meter \\
		& \gate{H} & \qw  & \qw & \qw & \targ & \qw & \meter \inputgroupv{5}{8}{1em}{3.1em}{\mkern-65mu\frac{\ket{0000}+\ket{1111}}{\sqrt{2}}}\\
		} \nonumber
	\end{equation}
	\caption{A Shor syndrome extraction circuit for measuring a stabilizer generator of the form $ZZZZ$.}
	\label{fig:Shor}
\end{figure}
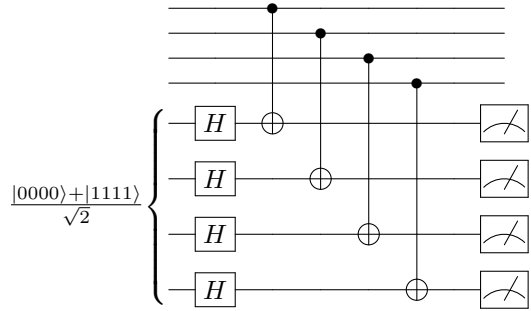

One drawback of the Shor syndrome extraction circuit is that the number of required ancilla qubits is equal to the maximum weight of the stabilizer generators. Also, fault-tolerant preparation of the ancilla cat state requires verification \cite{Shor96} or Divincenzo-Aliferis ancilla decoding circuit \cite{DA07}, which requires additional space and time overhead.
One possible technique that can reduce the number of required ancillas for FTQEC is the flag technique \cite{CR17a}, in which each syndrome extraction circuit uses one ancilla qubit to keep the syndrome measurement outcome and a few flag ancillas to find a location that a fault might have occurred.
A circuit of this kind will be called a \emph{flag circuit};
See \cref{fig:flag} for an example.
The flag measurement outcomes give extra information that can be used to partition set of all possible errors from a certain number of faults. Therefore, it is possible to distinguish between two non-equivalent errors that correspond to the same syndrome if the flag measurement outcomes associated with each error are different, making error correction easier.

Here we define fault combination, fault set, and distinguishability of a fault set as follows:

\begin{definition}{Fault combination, combined data error, and cumulative flag vector \cite{TL22}}

	A \emph{fault combination} $\Lambda =\{\lambda_{1},\lambda_{2},\dots,\lambda_{r}\}$ is a set of $r$ faults $\lambda_{1}, \lambda_{2}, \cdots, \lambda_{r}$.
	Suppose that the Pauli error due to the fault $\lambda_{i}$ can propagate through the circuit and lead to \emph{data error} $E(\lambda_{i})$ and \emph{flag vector} $\vec{f}(\lambda_{i})$. The \emph{combined data error} $\mathbf{E}(\Lambda)$ and \emph{cumulative flag vector} $\vec{F}(\Lambda)$ corresponding to the fault combination $\Lambda$ are,
	\begin{align}
		\mathbf{E}(\Lambda) & =\prod_{i=1}^r E(\lambda_{i}), \label{eq:combined_E}                           \\
		\vec{F}(\Lambda)    & =\sum_{i=1}^r \vec{f}(\lambda_{i})\;(\mathrm{mod}\;2). \label{eq:cumulative_f}
	\end{align}
	\label{def:fault_combi}
\end{definition}

\begin{definition}{Distinguishable fault set \cite{TL22}}

	Let $\mathcal{S}$ be the stabilizer group of a stabilizer code, and let the \emph{fault set} $\mathcal{F}_t$ denote the set of all possible fault combinations arising from up to $t$ faults during the measurement of stabilizer generators of $\mathcal{S}$.
	We say that $\mathcal{F}_t$ is \emph{distinguishable} if for any pair of fault combinations $\Lambda_p,\Lambda_q \in \mathcal{F}_t$, at least one of the following conditions is satisfied:
	\begin{enumerate}
		\item $\vec{s}(\mathbf{E}(\Lambda_p)) \neq \vec{s}(\mathbf{E}(\Lambda_q))$, or
		\item $ \vec{F}(\Lambda_p) \neq \vec{F}(\Lambda_q)$, or
		\item $\mathbf{E}(\Lambda_p) = \mathbf{E}(\Lambda_q)\cdot M$ for some stabilizer $M \in \mathcal{S}$,
	\end{enumerate}
	where $\vec{s}(\mathbf{E})$ is the error syndrome of a combined error $\mathbf{E}$.
	Otherwise, we say that $\mathcal{F}_t$ is \emph{indistinguishable}.
	\label{def:distinguishable}
\end{definition}

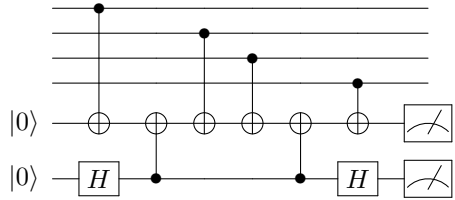
\begin{figure}[tbp]
	\begin{equation}
		\Qcircuit @C=1em @R=.7em {
		& \ctrl{4} & \qw & \qw & \qw & \qw & \qw & \qw & \\
		& \qw & \qw & \ctrl{3} & \qw & \qw & \qw & \qw & \\
		& \qw & \qw & \qw & \ctrl{2} &  \qw & \qw & \qw & \\
		& \qw & \qw & \qw & \qw & \qw & \ctrl{1} & \qw & \\
		\lstick{\ket{0}} & \targ & \targ & \targ & \targ & \targ & \targ & \meter \\
		\lstick{\ket{0}} & \gate{H} & \ctrl{-1} & \qw & \qw & \ctrl{-1} & \gate{H} & \meter
		} \nonumber
	\end{equation}
	\caption{A flag circuit for measuring a stabilizer generator of the form $ZZZZ$. }
	\label{fig:flag}
\end{figure}

Note that the cases of faulty flag qubit measurements are included when the fault set is calculated for verifying fault set distinguishability (see \cref{subsection:compact lookup}).
Having a distinguishable fault set is a key to successful error decoding.
Given a set of syndrome extraction circuits (with or without flags), we can calculate the fault set $\mathcal{F}_t$ and check whether it is distinguishable. If it is, all possible errors arising from up to $t$ faults that correspond to the same syndrome and cumulative flag vector are always logically equivalent.
Therefore, if the syndrome measurements give a syndrome $\vec{s}$ and a cumulative flag vector $ \vec{F}$, we can pick any error that corresponds to the pair $ (\vec{s},\vec{F})$ to be a recovery operator. Using this idea, a decoding table and an FTQEC protocol can be constructed.

With the notion of fault distinguishability, it is possible to further generalize the definition of FTQEC as follows:

\begin{definition}{Fault-tolerant error correction (revised) \cite{TL22}}

	Let $t \leq \lfloor (d-1)/2\rfloor$ where $d$ is the distance of a stabilizer code. An error correction protocol is \emph{$t$-fault tolerant} if the following two conditions are satisfied:
	\begin{enumerate}
		\item ECCP: For any input codeword with an error that can arise from $r$ faults before the protocol and corresponds to the zero cumulative flag vector, if $s$ faults occur during the protocol with $r+s \leq t$, ideally decoding the output state gives the same codeword as ideally decoding the input state.
		\item ECRP: If $s$ faults occur during the protocol with $s \leq t$, regardless of the number of faults that can cause the input error, the output state differs from any valid codeword by an error that can arise from $s$ faults and corresponds to the zero cumulative flag vector.
	\end{enumerate}
	\label{def:FT_TL}
\end{definition}

The main difference between these two definitions of FTQEC is that \cref{def:FT_TL} considers the \emph{number of faults} that can cause the input (or the output) error instead of the \emph{weight} of the error. An FTQEC protocol satisfying \cref{def:FT_TL} can be constructed if we can find syndrome extraction circuits that give a distinguishable fault set (see previous results \cite{TL22} by one of the authors of this work and the discussion in the next section for more details). In fact, while the threshold theorem proved by Aliferis, Gottesman and Preskill~\cite{AGP06} relied on the weight of the error to define fault tolerance (\cref{def:FT_AGP}), the theorem has been shown to hold \cite{TL22} even if the definition of fault tolerance uses the number of faults (\cref{def:FT_TL}) instead. For flag FTQEC, using \cref{def:FT_TL} can result in simpler FTQEC protocols, so we will use \cref{def:FT_TL} in the protocol development throughout this work.

\section{Optimization Tools for Space decoding}
\label{sec:space_decoder}

In this work, the term space decoder refers to a process that finds a recovery operator from a given syndrome under the assumption that it is exactly the same as the syndrome of an error that occurred to the codeword. The decoder succeeds if multiplying the error and the recovery operator gives a trivial logical operator (a stabilizer), and it fails if the multiplication gives a nontrivial logical operator. Our goal is to develop a space decoder such that whenever the total number of faults in the whole protocol is $\leq t$, the decoder always succeeds.
In this work, we are interested in a lookup table-based space decoder for flag FTQEC, so the decoder will use both syndrome and flag information obtained during the syndrome measurements. Note that the ability to correct faults for a certain code depends on the structure of the circuits for syndrome extraction, such as the ordering of gates.

In this section, we develop optimization tools for space decoding. In \cref{subsection:compact lookup}, we discuss how to efficiently construct a lookup table for error decoding for a distinguishable fault set $\mathcal{F}_t$, and introduce the notion of fault code. In \cref{subsec:MIM}, we discuss the Meet-in-the-Middle technique, an additional technique that can help improving our space decoders for both codes and increase the accuracy of the decoding.

\subsection{Compact lookup table for minimum weight decoding and fault code}\label{subsection:compact lookup}

In this section, we discuss how to construct the fault set $\mathcal{F}_t$, verify its distinguishability, and construct the lookup table for error decoding. With our method, we can reduce the memory footprint requirement of the lookup table by 87.5\% for self-orthogonal CSS codes compared to a lookup table designed for generic stabilizer codes. We also present the framework of \emph{fault codes} that enables fast construction using streamlined Pauli-frame simulation represented as matrix algebra operations over $\mathrm{GF}(2)$.

A brief summary of our methods is as follows. Let the \emph{weight of a fault combination} be the number of faults that give rise to the fault combination. The decoding table maps each full syndrome $ (\vec{s}, \vec{F})$ to a recovery operator that corresponds to the combined data error of the minimum-weight fault combination that results in the full syndrome. To construct the decoding table, we start by collecting all weight-1 fault combinations that may arise in the extraction circuits. We map each resulting full syndrome to its corresponding data error. At this point, we say that the \textit{search radius} of the lookup table is 1. Afterward, we combine pairs of weight-1 fault combinations to create all possible weight-2 fault combinations. The combined data error of each weight-2 fault combination is obtained by simply taking the product of the data errors, and the full syndrome is obtained by adding full syndromes of the weight-1 fault combination modulo 2. If the combining process leads to a new syndrome, we store it in the table. If the process leads to an existing syndrome, we have a collision and do one of the following: (1) If the stored combined data error and the new combined data error are the same up to a stabilizer, then we do nothing. (2) If the stored combined data error and the new combined data error differ by a logical operator (up to a stabilizer), then we raise an error; this implies that $\mathcal{F}_2$ is not distinguishable. At this point, if there is no combination that causes the second case (that is, $\mathcal{F}_2$ is distinguishable), we say that the search radius of the lookup table is 2. We can gradually increase the search radius using similar ideas until we reach the maximum search radius in which the fault set is distinguishable. Here, we rely on an efficient representation of the combined data errors using a decomposition of Pauli operators to pure errors, stabilizers and logical operators \cite{Poulin_2006}.

During sampling, the decoder receives a full syndrome that was measured. When the decoder finds this syndrome in the lookup table, it returns the corresponding \emph{actual recovery operator} (ARO). However, when the decoder cannot find the syndrome in the lookup table, it only returns a so-called \textit{canonical recovery operator} (CRO). Each syndrome has a unique canonical recovery operator, which guarantees that applying such an operator to the erroneous encoded state will map it back to the code space but with a possible logical error.

The full description of our methods is presented below.

\subsubsection{Reducing the memory footprint}

To decode an \codepar{n,k,d} stabilizer code, we can construct a lookup table that, for all possible fault combinations of weight 0 to $t$ (where $t=\lfloor \frac{d-1}{2} \rfloor$), stores the full syndrome $ \vec{\sigma}=(\vec{s}, \vec{F})$ as the key and maps the combined data error as the recovery operator. While this approach works, it is expensive. Let $T_{stab}$ denote the number of distinct full syndromes for the fault combinations of weight 0 to $t$ for a generic stabilizer code. As $T_{stab}$ and thus the size of the lookup table grow exponentially in $n$, $n-k$ (the number of generators), and the number of circuit locations, we want to choose a representation to store data as efficiently as possible. For general stabilizer codes, $n-k$ bits are required for the syndrome bits and $n-k$ bits for the cumulative flag vector (assuming flag circuits with single flag ancilla for simplicity). Meanwhile, the recovery operator requires $2n$ bits using the symplectic representation. Thus we have $T_{stab}(4n-2k)$ bits of data in the map.

Leveraging the structure of CSS codes, we can significantly improve the memory footprint. Assuming standard CSS decoding in which two separate lookup tables are used for $X$ and $Z$ decoding. Denote with $r_X$ and $r_Z$ the number of $X$- and $Z$-type stabilizer generators satisfying $r_X+r_Z=n-k$. The per entry cost decreases, as the entries only need to cater for $X$- or $Z$-type operators and syndromes. Each entry for the $X$- and $Z$-type syndromes will have $2r_X$ and $2r_Z$ bits respectively for the syndrome and the cumulative flag vector, and $n$ bits for the recovery operator. A self-orthogonal CSS code needs only one table futher decreasing the cost; see more detail on the total number of bits in \cref{app:memory footprint}. Moreover, we can reduce the number of bits for the recovery operator to $k$ using the following two key ideas:
\begin{enumerate}
	\item In general, for an \codepar{n,k,d} code, each Pauli operator $P \in \mathcal{P}_n$ can be decomposed as a product $P = EML$ of a pure error $E$, a stabilizer $M \in \mathcal{S}$, and a logical operator $L \in \overline{\mathcal{P}}_k$ (where $\overline{\mathcal{P}}_k$ is the $k$-dimensional logical Pauli group) \cite{Poulin_2006}.  We define a fixed set of pure errors called \emph{canonical recovery operators} (CRO), one CRO for each unique syndrome $\vec{s}$.
	\item Given a syndrome $\vec{s}(E)$, the goal of decoding is to find a recovery operator $R$ such that $RE \in \mathcal{S}$, thus $R$ converts the error into the logical identity operation.  For any possible Pauli error, we only have to store its \textit{logical class}, a value that indicates how the error is related to a CRO with the same syndrome. This enables the map value to be only $2k$ bits of information in general and $k$ bits in case the code is a self-orthogonal CSS code. In this latter case and with $k=1$, the logical class is 0 if the multiplication of the Pauli operator and the CRO with the same syndrome is in the stabilizer group, otherwise the logical class is 1.

\end{enumerate}
Altogether, for a self-orthogonal CSS code with $n \gg k$, the size of the table can be as small as 12.5\% of the table if we viewed the code as a generic stabilizer code and stored the full recovery operators instead of the logical classes. For a CSS code that is not self-orthogonal, the gain is smaller but still significant. See \cref{app:memory footprint} for detailed calculations around savings for the lookup table. Note that if the lookup table is used for proving distinguishability, all unique syndromes are required. However, in a realtime decoding architecture, the entries corresponding to significantly low probability fault combinations may be excluded, resulting in further reduction \cite{DLJ22}.

\subsubsection{Constructing the lookup table}

We now explicitly describe an algorithm to construct the lookup table. During the construction of the lookup table, we have a systemic way to enumerate fault combinations with their full syndromes and combined data errors instead of running through a circuit simulator for each case. The exhaustive enumeration of all possible fault combinations of weight 0 to $t$ is done in two steps. First, we enumerate the single faults and capture their full syndrome and logical class in a single column of the \emph{fault check matrix}, $H_{\text{f}}$ using matrix algebra over $\mathrm{GF}(2)$ to represent the propagation of errors in our syndrome extraction circuits. Second, we combine these columns in all possible combinations of $0$ to $t$ faults ($\sum_{i=0}^t {N \choose i}$ combinations in total, where $N$ is the number of possible single faults) while keeping track of the weight of each fault combination. This last step verifies whether $\mathcal{F}_t$ is distinguishable (which is equivalent to verifying whether the protocol is distance preserving), and at the same time builds a lookup table for the decoder.

\emph{Enumerating weight-1 faults---\;}From here on, we will only consider a self-orthogonal CSS code, and denote its parity check matrix $H$.
In order to list all possible single faults under the circuit-level depolarizing noise model, it is sufficient to consider all possible weight-1 faults within a single round of syndrome measurements.
Each column of the fault check matrix $H_{\text{f}}$ describes for each possible weight-1 fault what its full syndrome and its logical class are.
As the logical class of each fault depends on how its CRO is defined, we define the fault check matrix relative to the right inverse $H^{-1}$ of $H$ (for which $HH^{-1}=I_{(n-k)/2}$).

The high-level structure of $H_{\text{f}}$ consists of three major groups of rows and three major groups of columns. The three groups of rows are the $(n-k)/2$ \textit{generator bits}, the $(n-k)/2$ \textit{flag bits}, and the $k$ bits for the logical class. Each single fault which is represented by a column of $H_{\text{f}}$ can be put into one of the following three categories:

\begin{enumerate}

	\item \textbf{Pure data qubit errors} that result only in generator bits. They do not trigger flags, resulting in all-zero flag bits. The CRO $R$ of each pure data qubit error $E$ can be described by each column of $H^{-1}H$ (since the syndromes of CROs are $H(H^{-1}H) = (HH^{-1})H = H$),  thus the product $RE$ of each $E$ can be described by each column of $I_{n}\oplus H^{-1}H$ (where the matrix addition, denoted by $\oplus$, and multiplication are over $\mathrm{GF}(2)$). If $E$ is an $X$-type (or a $Z$-type) error, the logical class of $RE$ is described by a $k$-bit string in which the $i$-th bit indicates whether $RE$ anticommutes with $\bar{Z}_i$ (or $\bar{X}_i$). That is, the logical classes of all pure data qubit errors are described by
	      \[
		      \left(\begin{array}{c}
				      J_1^T(I_{n}\oplus H^{-1}H) \\
				      J_2^T(I_{n}\oplus H^{-1}H) \\
				      \dots                      \\
				      J_k^T(I_{n}\oplus H^{-1}H)
			      \end{array}
		      \right),
	      \]
	      where $J_i$ is the column vector representing $\bar{Z}_i$ (or $\bar{X}_i$).
	\item \textbf{Flag ancilla preparation or measurement errors} which do not propagate to data qubits, thus, each single-flag error will result in a single flag bit.
	      Therefore, all errors of this type have the all-zero syndrome and logical class 0, while the flag bits can be easily represented by the $(n-k)/2 \times (n-k)/2$ identity matrix.
	\item \textbf{Gate faults} that cause errors on the syndrome ancilla which can propagate to data and flag qubits --- we order these faults by top-down and left-right place of occurrence and capture their effect in syndrome bits, flag bits, and logical class.
	      The part of the effective matrix corresponding to this type of faults is denoted by $H_\text{f,gate}$.
\end{enumerate}

Note that single measurement and reset errors on the syndrome ancilla are ignored during this analysis as their effects would be removed by the time decoder through the repetition of syndrome measurements.

Generalizing $H_{\text{f}}$ for a non-self-orthogonal CSS code is straightforward. In that case, the parity check matrices for $X$- and $Z$-type errors can be different, leading to different fault check matrices. Generalizing $H_{\text{f}}$ for a generic stabilizer code is more complicated but still doable, as all operators must be considered in the symplectic form. In that case, the number of rows for the logical class is $2k$. Also, instead of taking the inner product with $J_i$, whether each CRO commutes or anticommutes with each logical operator can be determined by the symplectic inner product between the symplectic bitstrings representing the CRO and the logical operator.

%Note that generalizing this matrix for a generic stabilizer code (or non-\CB{self-orthogonal} CSS code) is straightforward. In that case, the number of rows for the logical class is $2k$. Also, logical operator representatives do not have to be transversal, but then instead of taking the inner product with $J$, the symplectic bitstring representing the logical operator needs to be applied with the symplectic inner product to determine whether it commutes or not with the CRO-recovered state.

In the case that the code is a self-orthogonal CSS code, $n$ is odd, $k=1$, and logical $X$ and logical $Z$ operators are transversal, the fault check matrix is,
\[
	H_{\text{f}} = \left( \begin{array}{c|c|c}
			H                       & 0           & \multirow{3}{*}{$H_{\text{f,gate}}$} \\
			0                       & I_{(n-1)/2} &                                      \\
			J^T(I_n \oplus H^{-1}H) & 0           &
		\end{array} \right),
\]
where $J$ is the all-one column vector of length $n$ (representing $X^{\otimes n}$ or $Z^{\otimes n}$).

As an example, consider the first group of columns for the \codepar{7,1,3} Steane code \cite{Steane96b} whose stabilizer generators can be defined by the parity check matrix,
\[
	H = \begin{pmatrix}
		0 & 0 & 0 & 1 & 1 & 1 & 1 \\
		0 & 1 & 1 & 0 & 0 & 1 & 1 \\
		1 & 0 & 1 & 0 & 1 & 0 & 1 \\
	\end{pmatrix}.
\]
One can pick its right inverse $H^{-1}$ as follows:
\[
	H^{-1} = \begin{pmatrix}
		0 & 0 & 1 \\
		0 & 1 & 0 \\
		0 & 0 & 0 \\
		1 & 0 & 0 \\
		0 & 0 & 0 \\
		0 & 0 & 0 \\
		0 & 0 & 0
	\end{pmatrix}.
\]
We can see that each column of $H^{-1}$ gives a Pauli operator for each syndrome bit. For a data error $E$ of any weight, the syndrome $\vec{s}(E)=HE$ can be recovered with the CRO defined by $R(\vec{s}(E))\equiv H^{-1}\vec{s}(E)$ as
\begin{align}
	\vec{s}(E \oplus R) & =H(E\oplus R) \nonumber        \\
	                    & =HE \oplus HH^{-1}HE \nonumber \\
	                    & =HE \oplus HE = 0. \nonumber
\end{align}
For example, for $E=(0110000)^T$, $\vec{s}(E)=(001)^T$ and $R(\vec{s}(E))=(1000000)^T$, thus $RE=(1110000)^T$, for which the syndrome is trivial (as $RE$ is a logical operator).

For errors of weight 1 on the data qubits, the operator $RE$ of each error can be represented by each column of $I_n \oplus H^{-1}H$. Since the logical class of $RE$ can be determined by its weight parity, the logical classes of this type of errors are the row of $L \equiv J^T(I_n \oplus H^{-1} H)$ where $J$ is the all-one column vector. That is for the Steane code, the part of $H_{\text{f}}$ corresponding to pure data qubit errors is
	\[
		\begin{pmatrix}
			H \\
			\hline
			0 \\
			\hline
			J^T(I_n \oplus H^{-1} H)
		\end{pmatrix}
		=
		\begin{pmatrix}
			0 & 0 & 0 & 1 & 1 & 1 & 1 \\
			0 & 1 & 1 & 0 & 0 & 1 & 1 \\
			1 & 0 & 1 & 0 & 1 & 0 & 1 \\
			\hline
			0 & 0 & 0 & 0 & 0 & 0 & 0 \\
			0 & 0 & 0 & 0 & 0 & 0 & 0 \\
			0 & 0 & 0 & 0 & 0 & 0 & 0 \\
			\hline
			0 & 0 & 1 & 0 & 1 & 1 & 0
		\end{pmatrix}.
	\]

	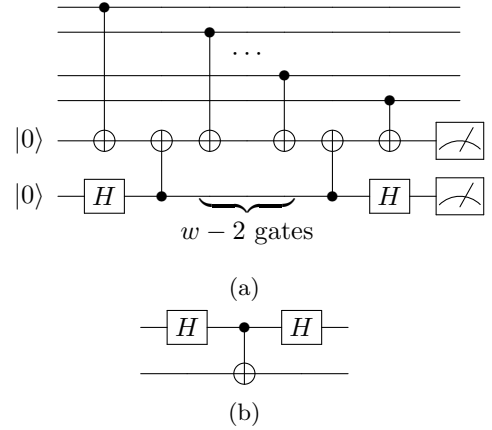
\begin{figure}[tbp]
		\begin{subfigure}[b]{0.45\textwidth}
			\begin{equation}
				\ \ \ \ \ \ \Qcircuit @C=1em @R=.7em {
				& \ctrl{5} & \qw       & \qw      & \qw             & \qw       & \qw       & \qw       & \qw    \\
				& \qw      & \qw       & \ctrl{4} & \qw             & \qw       & \qw       & \qw       & \qw    \\
				&          &           &          & \cdots          &           &           &           & \\
				& \qw      & \qw       & \qw      & \qw             & \ctrl{2}  & \qw       & \qw       & \qw    \\
				& \qw      & \qw       & \qw      & \qw             & \qw       & \qw       & \ctrl{1}  & \qw    \\
				\lstick{\ket{0}} & \targ    & \targ     & \targ    & \qw             & \targ     & \targ     & \targ     & \meter \\
				\lstick{\ket{0}} & \gate{H} & \ctrl{-1} & \qw      & \qw              & \qw       & \ctrl{-1} & \gate{H}  & \meter
				\gategroup{1}{4}{7}{6}{.7em}{_\}} \\
				&          &           &          & \text{$w-2$ gates}        &           &           &           &
				} \nonumber
			\end{equation}
			\captionsetup{justification=centering}
			\caption{}
			\label{subfig:flag_w}
		\end{subfigure}
		\begin{subfigure}[b]{0.45\textwidth}
			\begin{tabular}{c}
				\Qcircuit @C=1em @R=.7em {
				 & \gate{H} & \ctrl{1} & \gate{H} & \qw \\
				 & \qw      & \targ    & \qw      & \qw
				} \nonumber
			\end{tabular}
			\captionsetup{justification=centering}
			\caption{}
			\label{subfig:XNOT}
		\end{subfigure}
		\caption{(a) A flag circuit for measuring a $Z$-type stabilizer generator of weight $w$ in this work. A flag circuit for measuring a $X$-type stabilizer generator of weight $w$ can be obtained by replacing each CNOT gate that connects the data qubit to the syndrome ancilla with the gate in (b).}
		\label{fig:flag_w}
	\end{figure}

	\emph{Constructing $H_\text{f,gate}$---\;}In this work, we focus on the case that any $Z$-type or $X$-type stabilizer generator of weight $w$ is measured using a flag circuit with a single flag ancilla similar to the circuit in \cref{fig:flag_w} (with a slight modification, similar construction for a general flag circuit can also be made). For $H_\text{f,gate}$, we are interested in how the errors propagate from the syndrome ancilla to the data qubits and the flag ancilla. The error propagation is represented via a binary matrices, an idea closely related to the ``gate matrix'', where the direction of propagation is the opposite way, towards the ancilla from the data qubits \cite{HB21,HB21b}. Given single-flag syndrome extraction circuits for all stabilizer generators and the CNOT ordering for each circuit, $H_\text{f,gate}$ can be calculated via the \textit{propagator matrix} $P$ and the \emph{aggregator matrix} $A$, defined as follows: For the error correction protocol with $n$ data qubits and $r$ flag bits (which is the same number as the number of $X$ or $Z$ stabilizer generators), The matrix $P$ has $n+r$ rows. The number of columns of $P$ is $\sum_{i=1}^r (w(g_i) + 2)$, where $w(g_i)$ is the Hamming weight of the $i$-th stabilizer generator ($g_i$). This is from the fact that for each CNOT gate in the single-flag syndrome extraction circuits, the only fault that can lead to a unique data error after propagation is the fault that leads to a single $Z$ error on the target qubit of the CNOT (which is the syndrome ancilla). To simplify the construction, we construct a submatrix $P_i$ of size $(n+r) \times (w(g_i)+2)$ for each row $g_i$ of $H$ (i.e each stabilizer generator), then concatenate the submatrices to get,
	\begin{align}
		P = \left( P_1 P_2 \ldots P_r \right).
	\end{align}
	As the order of the CNOT gates matters in subtle ways, for a given stabilizer generator $g_i$, we represent the CNOT ordering by the permutation $\pi_i: \{1,2,\dots,w(g_i)\}\rightarrow supp(g_i)$, where $\pi_i(j)$ indicates the control (data) qubit of the $j$-th CNOT (the target qubit is always the syndrome ancilla). $\pi_i$ can also be represented by a list. For example, two possible permutations of CNOT gates in the syndrome extraction circuit for measuring $g_1$ of the \codepar{7,1,3} Steane code are $\pi_1= [4,5,6,7]$ and $\pi_1= [4,6,5,7]$.

	To construct $P_i$, we iterate from $j=1$ to $w(g_i)$, and create a column for each iteration with all zeros except for the 1 in row $\pi_i(j)$. We then insert an all-zero column on the second from the left and the second from the right positions (which represent the flag CNOTs), and set its value to 1 at row $n+i$. In our running example of $g_1=(0001111)$, for a permutation of $\pi_1=[4,6,5,7]$,

	\begin{align}
		P_i = \begin{pmatrix}
			      0 & 0 & 0 & 0 & 0 & 0 \\
			      0 & 0 & 0 & 0 & 0 & 0 \\
			      0 & 0 & 0 & 0 & 0 & 0 \\
			      1 & 0 & 0 & 0 & 0 & 0 \\
			      0 & 0 & 0 & 1 & 0 & 0 \\
			      0 & 0 & 1 & 0 & 0 & 0 \\
			      0 & 0 & 0 & 0 & 0 & 1 \\
			      0 & 1 & 0 & 0 & 1 & 0 \\
			      0 & 0 & 0 & 0 & 0 & 0 \\
			      0 & 0 & 0 & 0 & 0 & 0
		      \end{pmatrix}.
	\end{align}
	The aggregator matrix $A$ plays the role of propagating the errors to the end of the syndrome measurement circuits. For each $g_i$, we define $A_i$ to be a square matrix of size $(w(g_i)+2) \times (w(g_i)+2)$ having a lower triangle set to all 1s, and define $A=\bigoplus_{i=1}^r A_i$ to be the direct sum of all $A_i$'s. In our example case of $g_1$,
	\begin{align}
		A_i = \begin{pmatrix}
			      1 & 0 & 0 & 0 & 0 & 0 \\
			      1 & 1 & 0 & 0 & 0 & 0 \\
			      1 & 1 & 1 & 0 & 0 & 0 \\
			      1 & 1 & 1 & 1 & 0 & 0 \\
			      1 & 1 & 1 & 1 & 1 & 0 \\
			      1 & 1 & 1 & 1 & 1 & 1 \\
		      \end{pmatrix}
	\end{align}
	Multiplying the propagator and the aggregator matrices yields,
	\begin{align}
		PA = \begin{pmatrix} \Omega \\ \hline \Phi \end{pmatrix}=\begin{pmatrix} \Omega_1 & \Omega_2 & \ldots & \Omega_r \\
                \hline
                \Phi_1   & \Phi_2   & \ldots & \Phi_r\end{pmatrix},
	\end{align}
	where columns of the submatrices $\Omega_i$ are the final Pauli operators, and columns of the submatrices $\Phi_i$ are the cumulative flag vectors after measuring $g_i$ and having a fault propagated from the syndrome ancilla to the data qubits at the location corresponding to the given column.

	Next, we find the syndromes for these Pauli operators by multiplying them with the parity check matrix,
	\begin{align}
		S = H \Omega.
	\end{align}
	Then, for each syndrome, we define the CRO based on the right inverse $H^{-1}$,
	\begin{align}
		\Theta = H^{-1} S.
	\end{align}
	Finally, we determine the logical class $L$ for each of the faults by adding the parity of the CRO and the propagated, final data error,
	\begin{align}
		L = J^T (\Omega \oplus \Theta).
	\end{align}
	As a result, the part of $H_{\text{f}}$ corresponding to the gate faults is,
	\begin{align}
		H_\text{f,gate} = \begin{pmatrix} S \\ \hline \Phi \\ \hline L \end{pmatrix} = \begin{pmatrix} H \Omega \\ \hline \Phi \\ \hline J^T(I_n \oplus H^{-1}H)\Omega \end{pmatrix}.
	\end{align}

	The relationship between the full syndrome, the data error, and the CRO of each fault is as follows: Suppose that the $i$-th column of $H_{\text{f}}$ (which represents a single fault on the $i$-th location) contains error syndrome $\vec{s}_i$, flag vector $\vec{f}_i$, and logical class $l_i$. The CRO of the fault is $H^{-1}\vec{s}_i$, while the data error of the fault is $l_iJ\oplus H^{-1}\vec{s}_i$. That is, in case of a single fault, the \emph{actual recovery operator} (ARO) we need to apply when finding the full syndrome $(\vec{s}_i,\vec{f}_i)$ is $l_iJ\oplus H^{-1}\vec{s}_i$.

		\emph{Verifying distinguishability and building the lookup table---\;}The fault set $\mathcal{F}_t$ is distinguishable if and only if there is no fault combination from up to $2t$ faults that gives a non-trivial logical operator with trivial full syndrome \cite{TL22}. As the fault check matrix already contains all the possible single faults, in case of $t=1$, we only need to extend the matrix by a column with all zeros (which represents 0 faults) and check whether there is a pair of columns which are the same except for the logical class. If there is, the combined data errors of one or two faults add up to an undetectable logical operator, meaning that $\mathcal{F}_1$ is not distinguishable.

		\begin{table*}[tbp]
			\begin{tabular}{|l|l|l|l|l|}
				\hline
				                             & \codepar{7,1,3}    & \codepar{19,1,5}    & \codepar{37,1,7}             & \codepar{61,1,9}        \\
				\hline
				\# columns of $H_{\text{f}}$ & 28                 & 88                  & 181                          & 307                     \\
				\# unique columns            & 20                 & 62                  & 128                          & 218                     \\
				\# fault combinations        & 20                 & 1,953               & 349,632                      & 93,263,997              \\
				Cache size                   & 20                 & 1,587               & 262,500                      & 67,166,572              \\
				Memory                       & $\leq 1\text{kB}$  & $\leq 1\text{kB}  $ & $\approx 50 \text{MB}     $  & $\approx 1.38\text{GB}$ \\
				Verification time            & $\leq 1\text{ms} $ & $\leq 1\text{ms}  $ & $ \approx 720 \text{ms}    $ & $\approx 58.9\text{s}$  \\
				\hline
			\end{tabular}
			\caption{Metrics of the lookup table. The number of columns of the fault check matrix counted in the first row results from the three-part structure of data errors, flag errors, and gate faults. These columns are not necessarily unique, which can be seen in the second row that counts the number of unique columns. The time to verify distinguishability for the different codes on a single thread with our C++ code depends on the number of unique columns, hence the verification of the higher distance code takes longer than shorter ones. All timings are reported using Intel Xeon Gold 6226R, 2.90GHz processors. Some fault combinations have the same full syndrome, hence the cache size is smaller than the full number of fault combinations. The cache size in memory is reported from actual usage, including the overhead of the hash table implementation.}
			\label{tab: decoder numbers}

		\end{table*}

		When $t\geq 2$, we populate the cache with the logical classes of higher-weight fault combinations by simply combining all possible fault combinations of lower-weight fault combinations while keeping track of the weights of the fault combinations. We describe the $i$-th fault combination as a key-value pair $[(\vec{s}_i,\vec{F_i}): (l_i, w_i)]$ where $(\vec{s}_i,\vec{F_i})$ is the full syndrome, $l_i$ is the logical class, and $w_i$ is the weight of the fault combination. Combining the $i$-th and the $j$-th fault combinations gives $[(\vec{s}_i\oplus \vec{s}_j,\vec{F_i}\oplus \vec{F_j}): (l_i\oplus l_j, w_i+w_j)]$. As we aim to check whether $\mathcal{F}_t$ is distinguishable, we fill up the cache by combining any pair of fault combinations that satisfy $w_i+w_j \leq t$. In case that the process gives the new key (the full syndrome) that already exists in the cache, we have a key conflict. This can be one of the following cases:
		\begin{enumerate}
			\item The new and the existing fault combinations have the same full syndrome and the same logical class but have different weights. In this case, we store the fault combination with smaller weight in the cache.
			\item  The new and the existing fault combinations have the same full syndrome but have different logical classes. As the sum of weights of these two fault combinations is $\leq 2t$, we raise an error---there exists a fault combination from up to $2t$ faults that gives a non-trivial logical operator with trivial full syndrome, that is, $\mathcal{F}_t$ is not distinguishable.
		\end{enumerate}

		If at the end we find that $\mathcal{F}_t$ is distinguishable, we can construct a lookup table of search radius $t$ from the cache as follows: for each key-value pair $[(\vec{s}_i,\vec{F_i}): (l_i, w_i)]$ in the cache, we store a new key-value pair $[(\vec{s}_i,\vec{F_i}): l_iJ\oplus H^{-1}\vec{s}_i]$ in the lookup table (the weights are not necessary for decoding, though it might be useful for estimating the number of faults that causes the full syndrome). That is, $l_iJ\oplus H^{-1}\vec{s}_i$ is the ARO for the full syndrome $(\vec{s}_i,\vec{F_i})$. When performing error decoding, the ARO is applied if the full syndrome obtained from measurements is found on the lookup table; otherwise, the CRO ($H^{-1}\vec{s}_i$) is applied.

		The lookup table can then be stored in an efficient binary format on disk or memory as needed. In \cref{tab: decoder numbers}, we displayed the metrics related to the lookup table decoder obtained by the algorithm above.

		In summary, we perform an exhaustive search of fault combinations which gives us a lookup table with search radius $t$; this is equivalent to verifying the distinguishability of $\mathcal{F}_t$. If we can construct the lookup table with $t=\tau=\lfloor (d-1)/2 \rfloor$, we have a minimum-weight decoder that is distance-preserving under the circuit-level depolarizing noise model. As a hash-table requires $\mathcal{O}(1)$ amortized complexity for lookup, this decoder is also relatively fast for numerical simulations or real-time decoding compared to more complicated algorithms such as MaxSAT-decoding \cite{Berent_Burgholzer_Derks_Eisert_Wille_2023}, neural-network-based decoder \cite{Maskara_Kubica_Jochym-OConnor_2019}, or the restriction decoder \cite{Delfosse_2014} with minimum weight perfect matching decoding \cite{higgott2023sparse}, all of which have at least $\mathcal{O}(n)$ complexity. However, the table size scales exponentially in the number of qubits, locations, and stabilizer generators, and thus, constructing the lookup table may be impractical for a code of high distance.

		\subsubsection{The fault code}

Any CSS code can be defined by its parity check matrix $H$, which maps a bitstring representing a combination of \emph{errors} on the data qubits to the error syndrome of the error combination. In the case of flag FTQEC where the circuit-level noise model is considered, we can use similar ideas and define a \textit{fault code} by the fault check matrix $H_{\text{f}}$ which maps a bitstring representing a combination of \textit{possible faults} to the full syndrome of the fault combination (which includes the error syndrome of the combined data error and cumulative flag vector) and the logical class relative to the CRO for the syndrome. It should be noted that the distance of the fault code might be lower than the distance of the underlying CSS code; this depends on the syndrome extraction circuits, which affect the distinguishability of the fault set. We can define the \emph{effective distance} $d_\mathrm{eff}$ to be the minimum number of faults that can give a fault combination with a non-trivial logical operator and the trivial full syndrome. The number of faults $t_\mathrm{eff}$ that the fault code can correct is $t_\mathrm{eff}=\lfloor (d_\mathrm{eff}-1)/2 \rfloor$ (this is the maximum number of $t$ in which $\mathcal{F}_t$ is distinguishable). If the effective distance and the code distance are equal, we say that the error correction protocol is distance preserving. Calculating the distance of classical codes can be done by determining the spark of the parity check matrix $H$, which is known to be NP-hard, in general~\cite{TP13}. However, the spark algorithm does not work in the case of degenerate CSS codes, as it reports only the minimum weight of the stabilizers which is a lower bound on the code distance \cite{CS96}. Our algorithm described in this section can be viewed as a modified spark algorithm that uses the logical class information to calculate the distance of the code (based on $H$) and also the effective distance of the fault code (based on $H_\textrm{f}$).

The perspective of the fault code can also be useful to extend a technique frequently used for error sampling (in qecsim \cite{qecsim_2021}, for example) to the circuit-level noise model beyond the code capacity noise model (memory errors only) and phenomenological noise model (both memory and measurement errors). Here, a randomly generated column vector of Hamming weight $w$ now represents faults on $w$ locations instead of errors on $w$ qubits. Suppose that the vector $\vec{v}$ represents the fault combination and $H_{\text{f}}\vec{v}$  gives the full syndrome $(\vec{s}(\vec{v}),\vec{F}(\vec{v}))$ and the logical class $l(\vec{v})$. In an error correction simulation, the decoder can predict the recovery operator $\vec{r}$ based on the full syndrome.
We will find that the predicted recovery operator causes a logical error if and only if $l(\vec{v})$ and $l(\vec{r})$ differ.

In principle, this method can lead to a better sampling rate compared to running the full circuit simulation for each sample. However, one needs to be aware of the probability distribution when generating vectors representing the fault combinations, as each possible single fault might not occur at the same rate.

		\subsection{Meet-in-the-Middle Technique}
		\label{subsec:MIM}

		If the fault set $\mathcal{F}_t$ of each code is distinguishable, the flag FTQEC protocol can correct up to $t$ faults with certainty. However, whenever $t+1$ or more faults occur, the error correction is not guaranteed; our decoder can either remove the error or cause a logical error on the encoded state. Although the probability of having $t+1$ or more faults is $\mathcal{O}(p^{t+1})$, being able to correct more cases of faults can lead to a higher pseudothreshold. In this section, we introduce the Meet-in-the-Middle technique, which can help correct errors in case there are more than $t$ faults in our FTQEC protocol. Note that this technique is general and could help any FTQEC protocol with a table-based decoder to correct faults more than its capability if the stabilizer code being used is not a perfect (or a perfect CSS) code.

		\begin{figure*}[tbp]
			\centering
			\begin{subfigure}[b]{0.4\textwidth}
				\includegraphics[width=\textwidth]{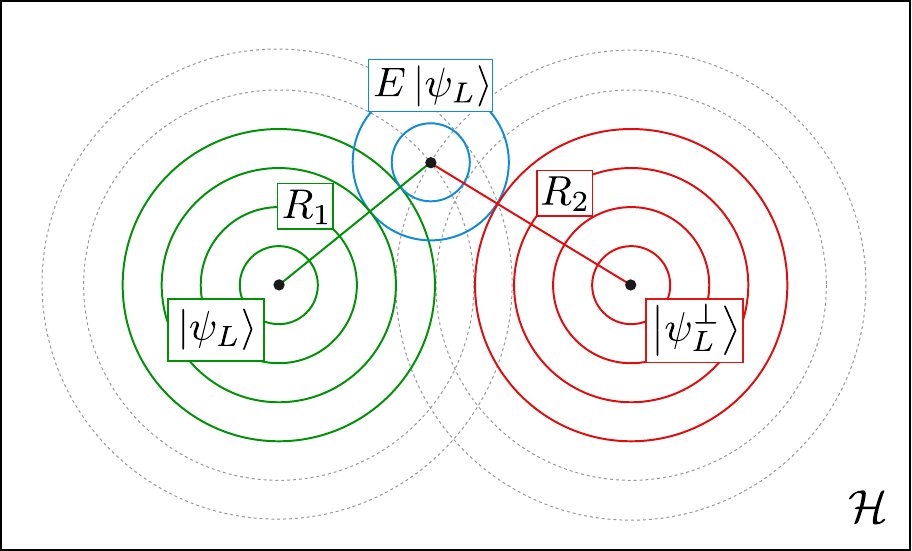}
				\captionsetup{justification=centering}
				\caption{}
				\label{subfig:MIM_good}
			\end{subfigure}
			\hspace{4mm}
			\begin{subfigure}[b]{0.4\textwidth}
				\includegraphics[width=\textwidth]{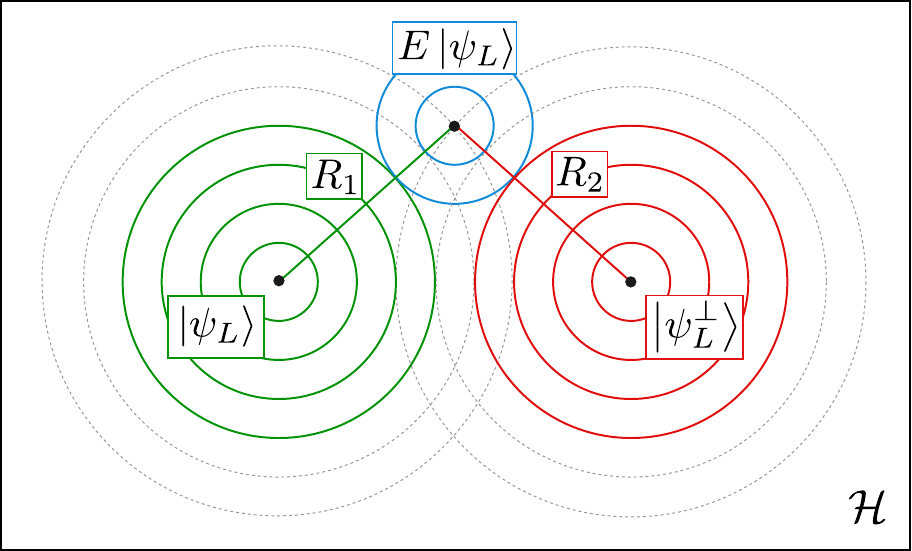}
				\captionsetup{justification=centering}
				\caption{}
				\label{subfig:MIM_bad}
			\end{subfigure}
			\caption{An illustration of the error decoding using a lookup table and the MIM technique on the Hilbert space $\mathcal{H}=\mathbb{C}^{\otimes n}$ of the physical qubits. A code of distance 9 is considered in this example. Using a lookup table with search radius 4 only, any erroneous states lying on the green (or red) circles, which are up to $4$ faults (circles) away from the logical state $\ket{\psi_L}$ (or $\ket{\psi_L^\perp}$), will be recovered to the logical state $\ket{\psi_L}$ (or $\ket{\psi_L^\perp}$). Consider the erroneous state $E\ket{\psi_L}$ which is not on any green or any red circle. In (a), $E\ket{\psi_L}$ is 5 faults away from $\ket{\psi_L}$ and 6 faults away from $\ket{\psi_L^\perp}$. Using the MIM table of radius 1, the recovery operator found by the decoder is $R_1$. Since $R_1E$ is a stabilizer, $R_1$ brings the state back to the original state $\ket{\psi_L}$. In (b), $E\ket{\psi_L}$ is 6 faults away from both $\ket{\psi_L}$ and $\ket{\psi_L^\perp}$. Using the MIM table of radius 2, the recovery operator found by the decoder is either $R_1$ such that $R_1E$ is a stabilizer, or $R_2$ such that $R_2E$ is a nontrivial logical operator. In this case, the state after recovery can be either $\ket{\psi_L}$ or $\ket{\psi_L^\perp}$.}
			\label{fig:MIM}
		\end{figure*}

		The Meet-in-the-middle (MIM) technique is inspired by the bidirectional search algorithm \cite{Korf_2003} to improve the table-based decoder previously discussed in \cref{subsec:flag_FTQEC} (see also \cref{subsection:compact lookup}) in case that the decoder cannot find in its lookup table the full syndrome obtained from measurements. Consider the case that the fault set $\mathcal{F}_t$ is distinguishable, and a lookup table of search radius $t$ can be constructed. Suppose that more than $t$ faults occur and the full syndrome is $(\vec{s}_m,\vec{F}_m)$ which is not in the lookup table. The table-based decoder discussed in \cref{subsec:flag_FTQEC} will return the canonical recovery operator that may cause a logical error after correction. To make successful error correction more probable in such cases, one could, in principle, construct a lookup table with a search radius larger than $t$ by relaxing the distinguishability requirement for fault combinations with weights higher than $t$. However, this can be impractical as the number of fault combinations grows too fast when the search radius increases.

		To overcome this issue, we instead conduct a \emph{search during decoding} starting from the missing syndrome $(\vec{s}_m,\vec{F}_m)$. That is, we construct another decoding table, called the MIM table, with search radius at most $\rho \leq t$ using ideas similar to the original lookup table, but we also add $(\vec{s}_m,\vec{F}_m)$ to the map key before storing the syndrome in the MIM table and check whether it is in the decoding lookup table or not. If a new map key in the MIM table is the same as some map key in the decoding lookup table, the search stops, and the decoder constructs a recovery operator from two combined data errors from the MIM table and the decoding table that correspond to the two map keys. If the MIM search radius reaches $\rho$ and no matching syndrome is found, the decoder returns the CRO for the full syndrome. Using the recovery operator obtained from this method, we can correct up to $t+\rho$ faults with probability higher than using the CRO of the full syndrome only.

		An example of error decoding using a lookup table and the MIM technique is illustrated in \cref{fig:MIM}. In our FTQEC protocols for hexagonal color codes of distance 3, 5, 7, and 9, we find that constructing the MIM table with search radius $\rho=t$ is sufficiently fast to be used at runtime. Note that the MIM technique does not guarantee successful error correction due to potential degeneracy in syndromes above the guaranteed number of correctable faults. However, we do find numerically that the MIM technique has a positive impact on the performance of our decoders for the hexagonal color codes for distances 3, 5, 7, and 9.

		\section{Optimization Tools for Time decoding} \label{sec:time_decoder}

		In general, faults can happen at any point during syndrome measurements, and the syndrome obtained at each round of measurements may not be the \emph{correct syndrome} (the syndrome of the combined data error at the end of that round). In particular, measurement errors can lead to a syndrome that differs from the correct syndrome by some bits. Errors on the data or ancilla qubits that happen in the middle of the syndrome extraction can also result in a syndrome that only captures some parts of the correct syndrome. Applying a space decoder to a faulty syndrome can lead to an incorrect recovery operation. For this reason, one must perform multiple rounds of syndrome measurements.

		The goal of a time decoder is to find a round with a correct syndrome at least at one point in the whole syndrome measurement process. If this can be done, an FTQEC protocol satisfying both conditions in \cref{def:FT_TL} can be constructed. Note that according to the definition, it is sufficient to consider only the case that the total number of faults in the whole protocol is no more than $t$, where $t$ is the number of errors that a stabilizer code being used can correct. This is because the failure probability of the FTQEC protocol (the probability of having $t+1$ or more faults in the protocol) will be $\mathcal{O}(p^{t+1})$ similar to the failure probability of an ideal error correction with the same stabilizer code. (Nevertheless, in terms of better decoding accuracy, it is beneficial to consider correcting some cases of $t+1$ or more faults as suggested by the MIM technique in \cref{subsec:MIM}.)

		In this section, we develop several types of time decoders for flag FTQEC, building on the ideas of adaptive decoders for Shor-style error correction \cite{TPB23}. Different time decoders use different fault count estimation procedures. In \cref{subsec:Shor_time_decoder}, we describe a conventional way to perform repeated syndrome measurements for flag FTQEC in terms of difference vectors, which will be useful for the development in latter sections. In \cref{subsec:adaptive_time_decoder}, we develop one-tailed and two-tailed adaptive time decoders which utilize flag information in the protocols. One-tailed adaptive decoder is applicable to a larger family of codes, while the two-tailed adaptive decoder is more optimized to self-orthogonal CSS codes but need to be used with an extended technique so that it becomes fully fault tolerant when applying to quantum computation. In \cref{subsec:ext_CSS}, we develop two extended techniques that can further improve the performance of our adaptive time decoders for FTQEC, given that the code being used is a self-orthogonal CSS code.

		\subsection{Shor time decoder for flag FTQEC} \label{subsec:Shor_time_decoder}

		In Shor's original approach \cite{Shor96}, the syndrome extraction is repeated until the same syndrome appears $t+1$ times in a row. Observe that for $R$ repeated but untrustworthy syndromes, at least $R$ faults are required to make them the same (we can think of having exactly the same measurement errors for example).  Therefore, to make sure that a round with a correct syndrome exists when considering the case with up to $t$ faults, it is sufficient to wait for $t+1$ repeated measurements. A time decoder with this stop condition will be referred to as \emph{Shor time decoder}.

		It is possible to rephrase the Shor time decoder using the notion of difference vector. For a syndrome history $(\vec{s}_1,\vec{s}_2,\dots,\vec{s}_m)$ of length $m$, we define a \emph{difference vector} $\vec{\delta}$ to be an $(m-1)$-bit string in which $\delta_i$ is 0 if $\vec{s}_{i+1}=\vec{s}_i$, or $\delta_i$ is 1 if $\vec{s}_{i+1}\neq\vec{s}_i$. As two repeated syndrome measurements are represented by a zero in the difference vector, Shor's method can be reformulated as waiting for $t$ consecutive zeros in $\vec{\delta}$.

		As we aim to correct no more than $t$ faults, the analysis of our time decoders can be made easier by thinking about the budget of $t$ faults. Shor's method spends all of this budget on counting consecutive zeros in the difference vector and is completely oblivious to other parts of the syndrome history (because the counter is reset whenever bit one appears). We call the parts of the syndrome history outside of the zero substring the \textit{context} of the zero substring. As Shor's method does not take the context into account, we call this strategy ``context-unaware''. In the worst-case scenario for the Shor time decoder, $(t+1)^2$ rounds of syndrome measurements are done before the stopping condition is satisfied. The context of the zero substrings contains useful information and not counting the faults in the context results in underestimating the number of faults that can cause a given syndrome history. Context-aware strategies that have a better estimate of the number of faults can stop earlier and execute fewer measurements, resulting in higher pseudothresholds.

		As flag circuits are used in the syndrome extraction, we also obtain a flag vector history $(\vec{f}_1,\vec{f}_2,\dots,\vec{f}_m)$ from $m$ rounds of syndrome measurements, which also leads to a cumulative flag vector history $ (\vec{F}_1,\vec{F}_2,\dots,\vec{F}_m)$.
		Note that the calculation of a difference vector \emph{does not} involve flag vectors; since the cases of faulty flag qubit measurements are considered when we evaluate the distinguishability of a fault set, all flag measurement outcomes are considered correct and can be used for error decoding. Our goal is to find a round such that all syndrome bits are correct.

		The correct syndrome will be used in conjunction with the flag information obtained \emph{right before} the measurements of the correct syndrome. Suppose that the code being used is a CSS code, and $X$-type generator measurements at round $i$ (which lead to $\vec{s}_{i,x}$, $\vec{f}_{i,x}$, and $ \vec{F}_{i,x}$) are done before $Z$-type generator measurements (which lead to $\vec{s}_{i,z}$, $\vec{f}_{i,z}$, and $ \vec{F}_{i,z}$). If the syndrome from round $l$ is correct according to Shor time decoder, $Z$-type (or $X$-type) error correction will be done using $\vec{s}_{l,x}$ and $ \vec{F}_{l-1,z}$ (or $\vec{s}_{l,z}$ and $ \vec{F}_{l,x}$). We also use similar ideas for error correction with other time decoders.

		Suppose that a table-based space decoder for flag FTQEC can be constructed (as discussed in \cref{sec:space_decoder}). Then, a flag FTQEC protocol with Shor time decoder is as follows:

		\begin{protocol}{Flag FTQEC protocol with Shor time decoder}

			Let $t=\lfloor (d-1)/2\rfloor$ be the number of errors that a stabilizer code of distance $d$ can correct. Let $\vec{s}_i=(\vec{s}_{i,x},\vec{s}_{i,z})$ and $\vec{f}_i=(\vec{f}_{i,x},\vec{f}_{i,z})$ be syndrome and flag vector obtained from the $i$-th round of full syndrome measurements with flag circuits. Let the cumulative flag vector at the $i$-th round be $ \vec{F}_i=(\vec{F}_{i,x},\vec{F}_{i,z})=\sum_{j=1}^i \vec{f}_{j}\;(\mathrm{mod}\;2)$. After the $i$-th round with $i\geq 2$, calculate $\delta_{i-1}$. Repeat syndrome measurements until the last $t$ bits of $\vec{\delta}$ is zero or the total number of rounds reaches $(t+1)^2$. Suppose that the latest round is round $l$. Perform $Z$-type error correction using $ (\vec{s}_{l,x},\vec{F}_{l-1,z})$, and perform $X$-type error correction using $ (\vec{s}_{l,z},\vec{F}_{l,x})$.
			\label{pro:flag_shor}
		\end{protocol}

		\subsection{Adaptive time decoder for flag FTQEC}
		\label{subsec:adaptive_time_decoder}

		Recently, FTQEC protocols with adaptive syndrome measurement techniques have been proposed by some of the authors of this work \cite{TPB23}.
		Instead of using flag qubits, in that work, each stabilizer generator is measured using a syndrome extraction circuit with a cat state (similar to Shor's original circuits \cite{Shor96}). The authors show that using the \textit{adaptive strong decoder}, it is possible to reduce the number of syndrome measurement rounds in the worst-case scenario from $(t+1)^2$ rounds to $(t+3)^2/4-1$ rounds. The resulting FTQEC protocol satisfies the error-weight based definition of FTQEC (\cref{def:FT_AGP}) and is applicable to any stabilizer code. In this work, we extend the adaptive strong decoder based measurement techniques to flag FTQEC and develop  protocols satisfying the revised FTQEC conditions that use the number of faults instead of the weight of errors (\cref{def:FT_TL}). The main difference from Ref. \cite{TPB23} is that this work also uses flag information to estimate the number faults occurred in the protocol, leading to a faster procedure to find a syndrome suitable for error correction. We start by describing the key ideas of Ref. \cite{TPB23} in terms of correlated and uncorrelated bit histories, which is useful for bounding from below the number of occurred faults. Afterwards, we explain how each technique in Ref. \cite{TPB23} could be improved using the flag information.

		\subsubsection{\label{ssec: context counting} Counting faults in correlated and uncorrelated bit histories}

		Let us first consider a way to estimate the number of occurred faults from a given difference vector $\vec{\delta}$. A single fault can cause either one or two consecutive bits of ones in $\vec{\delta}$ \cite{TPB23}. Thus, for each substring $\vec{\kappa}$ in $\vec{\delta}$, the number of faults that can cause such a substring is bounded from below by the number of 11 sequences plus the number of remaining 1s in $\vec{\kappa}$.

		Suppose that the difference vector is of the form $\vec{\delta}=\eta_1 1 \eta_2 1 \dots 1 \eta_c$ where $\eta_j=00\dots00$ are zero substrings and $1 \leq j \leq c$. For each $\eta_j$ of length $\gamma_j \geq 1$ with $2 \leq j \leq c-1$, we define $\alpha$ to be the total number of non-overlapping $11$ sequences plus the total number of remaining 1s before the substring $1\eta_j 1$, and define $\beta_j$ similarly but for the substring after $1\eta_j 1$ (for $\eta_1$ and $\eta_c$, $\beta_1$ and $\alpha_c$ are defined similarly to those of other $\eta_j$'s, and we let $\alpha_1=0$, $\beta_c=0$). The zero substring $\eta_j$ of length $\gamma_j$ corresponds to $\gamma_j+1$ consecutive rounds with the same syndrome, so the number of rounds that can cause these rounds to give incorrect syndromes is at least $\gamma_j+1$. Therefore, under the assumption that there are at most $t$ faults in the whole protocol, if we find that there exists $\gamma_j$ such that $t-\alpha_j-\beta_j < \gamma_j+1$, the syndromes of the $\gamma_j+1$ rounds that give rise to $\eta_j$ cannot be all incorrect. For this reason, at least one syndrome corresponding to $\eta_j$ is correct and can be used for error correction (see the full analysis in Ref. \cite{TPB23} for more details).

		For example, assume that the total number of faults in the protocol is $t=4$, and ten rounds of syndrome measurements give the following $\vec{\delta}$:
		\begin{align*}
			\text{Round:}        & \ 1 \ 2 \ 3 \ 4 \ 5 \ 6 \ 7 \ 8 \ 9 \ 10 \\
			\vec{\delta}\text{:} & \ \ 1 \ 1 \ 0 | 1 \ 0 \ 0 \ 1 | 0 \ 1
		\end{align*}
		Focusing on the substring $1\eta_j 1= 1001$, we find that $\alpha_j=1$ and $\beta_j=1$, meaning that the patterns of $\vec{\delta}$ on the left and the right sides of $1001$ arise from at least two faults, and the number of remaining faults is at most two. We can see that $\gamma_j=2$ because of the two zeros in the substring $1001$ and corresponds to three rounds with the same syndrome. Since the number of remaining faults that can cause the pattern $1001$ is less than the number of rounds with the same syndrome, the syndrome of at least one round in these three rounds must be correct and can be used for error correction.

		There are multiple, increasing fine-grained ways of estimating the number of faults in the context around each zero substring in the difference vector. Here, we use the term \emph{bit history} as a general term for a series of syndrome bits (measurement outcomes) from a given stabilizer generator, a given flag bit (the measurement outcome of a flag qubit), or bits in a difference vector (that is taken as the difference history of a group of bits). A key element in this discussion is the notion of correlated and uncorrelated bit histories.

		Under the assumed error model, two-bit histories are uncorrelated if they are independent of each other. For example, in our case, the circuit-level depolarizing channel is memoryless, and each fault can cause either one or two consecutive bits of ones. Thus, different sections of the same syndrome bit history that are at least two bits apart are uncorrelated as they are independent in time. Similarly, in space, if there are no shared qubits between two generators,  then their syndrome bit histories are completely independent. Also, flag qubits are always reset between rounds of measurements, and thus, all flag bits are  independent. However, when two stabilizer generators share at least one qubit, their syndrome bit histories are correlated. Similarly, due to hook errors, the flag qubit's bit history and the syndrome bit history of that same stabilizer generator are correlated.

		Our goal is to estimate the number of faults that occurred from a given bit history in the case of flag FTQEC. Estimates from uncorrelated histories can be summed together. When two or more estimates are from correlated histories, the best we can do is to take the maximum of those estimates. Note that the total estimates must not exceed the actual total number of occurred faults in any case, otherwise, the error correction protocol will not be fault tolerant.

		For the estimation in the previous work \cite{TPB23}, which is discussed previously in this section, the bits of the syndrome history before and after each substring $1 \eta_j 1$ are uncorrelated to the bits within $\eta_j$ under the memoryless depolarizing channel assumption. This means that $\alpha_j$ and $\beta_j$, which are the minimum numbers of faults that can cause the substring before and after $1 \eta_j 1$, can be independently estimated. The estimated number of faults in the context outside of the zero substring $\eta_j$ is, therefore, $\alpha_j+\beta_j$.

		In this work, we further extend the fault counting idea to flag FTQEC in which flag circuits with single flag qubit are used for syndrome extraction. Below, we will discuss two types of adaptive time decoders with different stop conditions, namely \emph{one-tailed and two-tailed adaptive time decoders}. Both protocols are applicable to any stabilizer code as long as flag circuits for the code that give a distinguishable fault set can be found. The flag FTQEC protocol with one-tailed adaptive time decoder satisfies the FTQEC conditions in \cref{def:FT_TL}, thus it is applicable to any fault-tolerant quantum computation as long as the fault-tolerant implementation of other operations (gate, state preparation, or measurement) also satisfies the revised definition of fault tolerance which considers the number of faults instead of the weight of the error \cite{TL22}. Meanwhile, the flag FTQEC protocol with two-tailed adaptive time decoder does not satisfy the FTQEC conditions in \cref{def:FT_TL} as the output error may correspond to a nontrivial cumulative flag vector, hence it is only applicable to quantum memory. Nevertheless, for a self-orthogonal CSS code, the FTQEC protocol with the two-tailed adaptive time decoder can be applied to any fault-tolerant Clifford computation if the cumulative flag vector is processed appropriately. An analysis of this extension will be discussed in \cref{subsec:ext_CSS}.

		\subsubsection{Two-tailed adaptive time decoder}
		\label{ssec:two-tailed}

		For the substring $1\eta_j 1$ in $\vec{\delta}$, suppose that bit one on the left of $\eta_j$ is the $i_1$-th bit of $\vec{\delta}$, and bit one on the right of $\eta_j$ is the $i_2$-th bit of $\vec{\delta}$. Let $\alpha_j,\beta_j,\gamma_j$ be defined as before, and let $\mu_j,\nu_j$ be the total numbers of nonzero flag bits obtained from round 1 to round $i_1$ and from round $i_2+1$ onward.
		Also, let $\omega_j$ be the sum of the numbers of flag bits that \emph{exceed 1 bit per round} during round $i_1+1$ to round $i_2$.
		For example, consider the substring $1\eta_j 1= 1001$ in the example below:
		\begin{align*}
			\text{Round:}        & \ 1 \ 2 \ 3 \ 4 \ 5 \ 6 \ 7 \ 8 \ 9 \ 10 \\
			\text{\# flag bits:} & \ 1 \ 0 \ 2 \ 0 | 0 \ 2 \ 1 | 0 \ 0 \ 1  \\
			\vec{\delta}\text{:} & \ \ 1 \ 1 \ 0 | 1 \ 0 \ 0 \ 1 | 0 \ 1
		\end{align*}
		In this example, $\alpha_j=1$, $\beta_j=1$, $\gamma_j=2$, $\mu_j=3$, and $\nu_j=1$, and $\omega_j=1$.

		Since a single fault can cause both nontrivial flag bits and syndrome differences (that is, syndrome bits and flag bits are correlated), one has to make sure that the number of faults is not overcounted. The numbers of faults that can cause bit histories before and after $1\eta_j 1$ are bounded from below by $\tilde{\alpha}_j=\max(\alpha_j,\mu_j)$ and $\tilde{\beta}_j=\max(\beta_j,\nu_j)$, respectively. So an estimate of the number of faults for the context outside of $\eta_j$ is $\tilde{\alpha}_j+\tilde{\beta}_j$.

		Next, let us consider $\eta_j$ of length $\gamma_j$ which corresponds $\gamma_j+1$ consecutive rounds with the same syndrome. To make all syndromes in this region incorrect, it requires at least one fault per round. So if we find a round with more than one flag bit, the number of flag bits that exceed one bit per round can be a part of the total estimate. That is, for each $\eta_j$, the total estimate is  $\tilde{\alpha}_j+\tilde{\beta}_j+\omega_j$.

		Under the assumption that there are at most $t$ faults in the whole protocol, if we find that there exists $\gamma_j$ such that $t-\tilde{\alpha_j}-\tilde{\beta_j}-\omega_j < \gamma_j+1$ (or equivalently, $\tilde{\alpha_j}+\tilde{\beta_j}+\gamma_j+\omega_j \geq t$), we know that a syndrome of at least one round in the $\gamma_j+1$ rounds that give rise to $\eta_j$ must be correct.

		Another way to find a correct syndrome is to estimate the total number of faults that can cause the whole syndrome and flag bit histories. Let $N_\mathrm{11}$ be the total number of non-overlapping $11$ sequences in the whole $\vec{\delta}$. Assuming that there are at most $t$ faults in the whole protocol, if $N_\mathrm{11} \geq t$, the last round must have a correct syndrome.

		Suppose that a table-based space decoder for flag FTQEC can be constructed. Then, a flag FTQEC protocol with two-tailed adaptive time decoder is as follows:

		\begin{protocol}{Flag FTQEC protocol with two-tailed adaptive time decoder}

			Let $t=\lfloor (d-1)/2\rfloor$ be the number of errors that a stabilizer code of distance $d$ can correct. Let $\vec{s}_i=(\vec{s}_{i,x},\vec{s}_{i,z})$ and $ \vec{F}_i=(\vec{F}_{i,x},\vec{F}_{i,z})$ be syndrome and cumulative flag vector obtained from the $i$-th round of full syndrome measurements with flag circuits. After the $i$-th round with $i\geq 2$, calculate $\delta_{i-1}$. Repeat syndrome measurements until one of the following conditions is satisfied, then perform error correction using the error syndrome corresponding to each condition:
			\begin{enumerate}
				\item For each $\eta_j$ in $\vec{\delta}$, calculate $\tilde{\alpha}_j,\tilde{\beta}_j,\gamma_j,\omega_j$. If at least one $\eta_j$ with $\tilde{\alpha_j}+\tilde{\beta_j}+\gamma_j+\omega_j \geq t$ is found, stop the syndrome measurements. Let $l$ be the last round of the $\gamma_j+1$ rounds that correspond to $\eta_j$. Perform $Z$-type error correction using $ (\vec{s}_{l,x},\vec{F}_{l-1,z})$, and perform $X$-type error correction using $ (\vec{s}_{l,z},\vec{F}_{l,x})$.
				\item Calculate $N_\mathrm{11}$ from the whole syndrome and flag bit histories. If $N_\mathrm{11} \geq t$, stop the syndrome measurements. Suppose that the latest round is round $l$. Perform $Z$-type error correction using $ (\vec{s}_{l,x},\vec{F}_{l-1,z})$, and perform $X$-type error correction using $ (\vec{s}_{l,z},\vec{F}_{l,x})$.
			\end{enumerate}
			\label{pro:flag_two-tailed}
		\end{protocol}

		The two-tailed adaptive time decoder for flag FTQEC developed in this work use similar ideas to the adaptive strong decoder presented in the previous work \cite{TPB23}. Therefore, the number of syndrome measurement rounds in the worst-case scenario is $(t+3)^2/4-1$ when $t$ is odd, and is $(t+2)(t+4)/4-1$ when $t$ is even. This can be proved by assuming that all faults does not cause any nonzero flag bits, then the rest of the proof follows the proof of Theorem 2 of the previous work \cite{TPB23}.

		If the syndrome $\vec{s}_l$ and cumulative flag vector $ \vec{F}_l=\sum_{i=1}^l \vec{f}_{i}\;(\mathrm{mod}\;2)$ of round $l$ are used for error correction, any faults that happened up to round $l$ will be corrected. However, because round $l$ may correspond to some $\eta_j$ in the middle of $\vec{\delta}$, an output error may correspond to a nontrivial cumulative flag vector. Therefore, \cref{pro:flag_two-tailed} may not satisfy FTQEC conditions in \cref{def:FT_TL} and cannot be applied to fault-tolerant quantum computation. Nevertheless, \cref{pro:flag_two-tailed} is still applicable to a quantum memory. To do so, one needs to pass the \emph{remaining cumulative flag vector} of the current FTQEC routine (the sum of the flag vectors from round $l+1$ onward) to the next FTQEC routine and use it as an initial flag vector.

		\subsubsection{One-tailed adaptive time decoder}
		\label{ssec:one-tailed}

		One-tailed and two-tailed decoders use similar ideas to estimate the number of faults, except that in the one-tailed case, the syndrome and cumulative vector for error correction must be from the very last zero substring in $\vec{\delta}$ (it is to ensure that the output error satisfies both conditions in \cref{def:FT_TL}). Suppose that $\vec{\delta}=\eta_1 1 \eta_2 1 \dots 1 \eta_c$ for some positive integer $c$, $\eta_c$ has length $\gamma_c \geq 1$, and bit one on the left of $\eta_c$ is the $i_1$-th bit of $\vec{\delta}$. We define $\alpha_c$ as usual and define $\mu_c$ to be the total number of nonzero flag bits obtained from round 1 to round $i_1$. Also, we define $\omega_c$ to be the sum of the numbers of flag bits that exceed 1 bit per round during round $i_1+1$ onward.
		Let $\tilde{\alpha}_c=\max(\alpha_c,\mu_c)$. In this case, the total estimate of the number of occurred faults is $\tilde{\alpha}_c+\omega_c$.

		Assuming that there are at most $t$ faults in the whole protocol, if we find that $\tilde{\alpha_c}+\gamma_c+\omega_c \geq t$, at least one round in the $\gamma_c+1$ rounds that give rise to $\eta_c$ must have a correct syndrome. This is the first possible stop condition.

		The second possible stop condition is similar to what we have for the two-tailed decoder. Let $N_\mathrm{11}$ be the total number of non-overlapping $11$ sequences in the whole $\vec{\delta}$. If $N_\mathrm{11} \geq t$, the last round must have a correct syndrome.

		Suppose that a table-based space decoder for flag FTQEC can be constructed. Then, a flag FTQEC protocol with the one-tailed adaptive time decoder is as follows:

		\begin{protocol}{Flag FTQEC protocol with one-tailed adaptive time decoder}

			Let $t=\lfloor (d-1)/2\rfloor$ be the number of errors that a stabilizer code of distance $d$ can correct. Let $\vec{s}_i=(\vec{s}_{i,x},\vec{s}_{i,z})$ and $ \vec{F_i}=(\vec{F}_{i,x},\vec{F}_{i,z})$ be syndrome and cumulative flag vector obtained from the $i$-th round of full syndrome measurements with flag circuits. After the $i$-th round with $i\geq 2$, calculate $\delta_{i-1}$. Repeat syndrome measurements until one of the following conditions is satisfied:
			\begin{enumerate}
				\item $\tilde{\alpha_c},\gamma_c,\omega_c$ satisfy $\tilde{\alpha_c}+\gamma_c+\omega_c \geq t$;
				\item $N_\mathrm{11} \geq t$.
			\end{enumerate}
			Suppose that the latest round when any condition is satisfied is round $l$. Perform $Z$-type error correction using $ (\vec{s}_{l,x},\vec{F}_{l-1,z})$, and perform $X$-type error correction using $(\vec{s}_{l,z},\vec{F}_{l,x})$.
			\label{pro:flag_one-tailed}
		\end{protocol}

		The number of rounds of full syndrome measurements in the worst-case scenario for \cref{pro:flag_one-tailed}, which is also the minimum number of rounds required to guarantee that error correction can be done, can be found by the following theorem:
		\begin{theorem}
			Suppose that flag circuits being used in \cref{pro:flag_one-tailed} give a distinguishable fault set $\mathcal{F}_t$, where $t=\lfloor (d-1)/2\rfloor$ and $d$ is the distance of the stabilizer code. Performing $\frac{t(t+3)}{2}+2$ rounds of full syndrome measurements is sufficient to guarantee that \cref{pro:flag_one-tailed} is strongly $t$-fault tolerant; i.e., both conditions in \cref{def:FT_TL} are satisfied.
			\label{thm:bound_flag}
		\end{theorem}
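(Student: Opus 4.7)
The plan is to prove the bound by analyzing the worst-case adversarial sequence of syndrome outcomes that maximizes the number of rounds before either stop condition in \cref{pro:flag_one-tailed} triggers, under a total fault budget of $t$. The strategy is to convert the no-stop conditions at every round into constraints on the difference vector $\vec{\delta}$ and sum these to bound its total length.

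First, I would reduce to the case with no triggered flag bits. Since $\tilde{\alpha}_c = \max(\alpha_c, \mu_c)$ and $\omega_c$ enters additively into the first stop condition, any nonzero flag bit only tightens that condition and shortens the protocol. So the adversary's worst case has all flag bits zero, and I may substitute $\tilde{\alpha}_c = \alpha_c$ and $\omega_c = 0$ throughout.

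Second, I would decompose the hypothetical $\vec{\delta}(m) = \eta_1 1 \eta_2 1 \cdots 1 \eta_c$ with $|\eta_j| = \gamma_j$, and derive a per-$\eta_j$ constraint from the no-stop condition at the round just before the $j$-th 1 appears (for $j \leq c - 1$) and at the stop round itself (for $j = c$). In the worst case where all 1s are isolated singletons, the fault count in the prefix preceding the $(j-1)$-th 1 satisfies $\alpha_c^{(j)} \leq j - 2$ for $j \geq 2$, yielding $\gamma_1 \leq t - 1$, $\gamma_j \leq t - j + 1$ for $2 \leq j \leq c - 1$, and $\gamma_c \leq t - c + 2$ at the stop round. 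Combined with the fault-budget bound $c \leq t + 1$ and the identity $|\vec{\delta}(m)| = \sum_j \gamma_j + (c - 1)$, summing gives $m = |\vec{\delta}(m)| + 1 \leq \frac{(t+1)(t+2)}{2}$, which is at most $\frac{t(t+3)}{2} + 2$ as claimed.

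The main obstacle will be handling $\vec{\delta}$ containing 11 sequences: a single fault producing an adjacent pair contributes 1 to $\alpha_c$ in the prefix but 2 bits to $\vec{\delta}$, potentially allowing the adversary to pack more 1s. A careful case analysis using the second stop condition $N_{11} \leq t - 1$ is needed to show that mixing 11 pairs with isolated 1s cannot surpass the bound obtained from the isolated-1s extremal case. The full argument parallels the two-tailed analysis underlying Theorem~2 of Ref.~\cite{TPB23}, now adapted to the one-tailed setting where only the last zero substring is monitored.
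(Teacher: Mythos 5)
There is a genuine gap, and it is located exactly at the step you flag as ``the main obstacle.'' Your isolated-$1$s analysis correctly yields $m \leq \frac{(t+1)(t+2)}{2} = \frac{t(t+3)}{2}+1$, but the claim that mixing in $11$ pairs ``cannot surpass the bound obtained from the isolated-$1$s extremal case'' is false, and it fails by exactly the one round that separates your bound from the theorem's. The true extremal $\vec{\delta}$ has a trailing $11$ pair: for $t=2$ it is $\vec{\delta}=01011$, which is reachable from only $t$ faults (one isolated $1$ at position~2, one $11$ pair at positions~4--5) and forces a seventh round before the stop condition can trigger, matching $\frac{t(t+3)}{2}+2 = 7$, not your $\frac{(t+1)(t+2)}{2}=6$. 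The mechanism you would need to account for is that a trailing $1$ sets $\gamma_c=0$, and the one-tailed stop condition can only be evaluated when $\gamma_c\ge 1$; the protocol therefore cannot stop until a subsequent $0$ appears, while the extra $1$ from the pair costs zero additional fault budget (a single fault already causes two consecutive ones) and does not raise $N_{11}$ to $t$. Your side constraint $c\le t+1$ also breaks in this regime, since the degenerate empty $\eta_j$'s between adjacent $1$s push $c$ above $t+1$ without spending extra faults.

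The paper's proof takes essentially the same route you do --- reduce to the no-flag case, decompose $\vec{\delta}$ into zero-substrings, bound each $\gamma_j$ by induction on the prefix --- but it lets the $\gamma_j$ sequence run all the way down to zero, producing the extremal pattern that ends in $\cdots 0\,1\,1$ of length $\frac{t(t+3)}{2}$, and then separately argues that the next observed bit must be a $0$ because that form already consumes the entire fault budget of $t$. That extra bit plus the trigger round is what yields the $+2$. To repair your proof you would need to replace the claim at the end with its opposite: allow the $11$-terminated configuration, show it adds at most one round over the isolated case, and argue via the fault budget (not via $N_{11}\le t-1$, which is too weak here since the extremal form has $N_{11}=1$) that no further extension is possible.
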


		\begin{proof}
			Suppose that $\vec{\delta}=\eta_1 1 \eta_2 1 \dots 1 \eta_c$ and $\gamma_c \geq 1$. We will show that if none of $\eta_1$, $\eta_1 1 \eta_2$, $\eta_1 1 \eta_2 1 \eta_3$, $\dots$, $\eta_1 1 \eta_2 1 \dots 1 \eta_c$ satisfies any condition in \cref{pro:flag_one-tailed}, the maximum length of such $\vec{\delta}$ is $\frac{t(t+3)}{2}$. In the worst-case scenario, flag measurement results do not help in estimating the number of occurred faults, so we can assume that $\tilde{\alpha}_c=\alpha_c$ and $\omega_c=0$. Below are the results from analyzing $\eta_1$, $\eta_1 1 \eta_2$, and $\eta_1 1 \eta_2 1 \eta_3$:
			\begin{enumerate}
				\item For $\eta_1$, $\alpha_c = 0$ and $\gamma_c=\gamma_1$, so the maximum length of $\eta_1$ such that $\tilde{\alpha_c}+\gamma_c \geq t$ is not satisfied is $t-1$.
				\item For $\eta_1 1 \eta_2$, $\alpha_c = 0$ and $\gamma_c=\gamma_2$, so the maximum length of $\eta_2$ such that $\tilde{\alpha_c}+\gamma_c \geq t$ is not satisfied is $t-1$.
				\item For $\eta_1 1 \eta_2 1 \eta_3$, $\alpha_c = 1$ and $\gamma_c=\gamma_3$, so the maximum length of $\eta_3$ such that $\tilde{\alpha_c}+\gamma_c \geq t$ is not satisfied is $t-2$.
			\end{enumerate}
			By induction, the maximum length of $\vec{\delta}=\eta_1 1 \eta_2 1 \dots 1 \eta_c$ such that $\tilde{\alpha_c}+\gamma_c \geq t$ is not satisfied is $(t-1)+1+(t-1)+1+(t-2)+1+\dots+1+1+0+1$, which is $\frac{t(t+3)}{2}$. Here $\vec{\delta}$ is of the form,
			\begin{equation}
				\underbrace{00\dots 00}_{t-1} 1 \underbrace{00\dots 00}_{t-1} 1 \underbrace{00\dots 00}_{t-2} 1 \underbrace{00\dots 00}_{t-3} 1\dots 1001011 \label{eq:max_delta_flag}
			\end{equation}

			The number of rounds that gives $\vec{\delta}$ of the maximum length is $\frac{t(t+3)}{2}+1$. By performing one more round of syndrome measurements, $\vec{\delta}$ is extended by one bit, which must be 0 if the total number of faults is no more than $t$. In that case, $\tilde{\alpha_c}+\gamma_c \geq t$ will be satisfied. Therefore, $\frac{t(t+3)}{2}+2$ rounds of full syndrome measurements is sufficient to guarantee that flag FTEC can be performed.

			Note that there are other forms of $\vec{\delta}$ in which none of $\eta_1$, $\eta_1 1 \eta_2$, $\eta_1 1 \eta_2 1 \eta_3$, $\dots$, $\eta_1 1 \eta_2 1 \dots 1 \eta_c$ satisfies any condition in \cref{pro:flag_one-tailed}, and the length of $\vec{\delta}$ is $\frac{t(t+3)}{2}-1$; For example, suppose that $t=3$. Possible forms of such $\vec{\delta}$ are $001101011$, and $001001111$. In any case, one of the conditions in \cref{pro:flag_one-tailed} will be satisfied if one more round of syndrome measurements is done, so the number of rounds to guarantee fault tolerance is still $\frac{t(t+3)}{2}+2$.
		\end{proof}

		Note that the number given by \cref{thm:bound_flag} is worse than that of the two-tailed decoder because we are not allowed to check whether the syndrome of any round in the middle can be used for error correction.

		An advantage of the FTQEC protocol with one-tailed adaptive time decoder is that it is applicable to any kind of fault-tolerant quantum computation as long as the corresponding fault-tolerant implementation satisfies the revised definitions of fault tolerance which consider the number of faults instead of the weight of errors \cite{TL22}. This is possible because when the syndrome and cumulative flag vector for error correction are from the last zero substring in $\vec{\delta}$, it is guaranteed that the output error corresponds to a zero cumulative flag vector.

		\subsection{Extended techniques for CSS codes}
		\label{subsec:ext_CSS}

		In this section, we discuss two additional techniques which can further improve our flag FTQEC protocols with adaptive time decoding. The first technique is the separated $X$ and $Z$ counting which is applicable to any CSS code. This technique is based on the ideas from Refs. \cite{DR20,TPB23}, and can be used to improve the pseudothreshold. The main difference from the technique developed in Ref. \cite{TPB23} is that this work also uses flag information to estimate the number of occurred faults, making the procedure to obtain a syndrome for error correction terminate faster. The second technique is the classical processing of the remaining cumulative flag vector. This technique allows our flag FTQEC protocol with the two-tailed adaptive time decoder to be applicable to any fault-tolerant Clifford computation.

		\begin{table*}[tbph!]
			\begin{center}
				\begin{tabular}{| c | c | c |}
					\hline
					Remaining cumulative flag vector                             & Logical Clifford               & Initial flag vector of the                                                                            \\
					of the current FTQEC routine                                 & operation                      & next FTQEC routine                                                                                    \\
					\hline
					$ (\vec{F}_x,\vec{F}_z)$                                     & $\bar{H}$                      & $ (\vec{F}_z,\vec{F}_x)$                                                                              \\
					\hline
					$ (\vec{F}_x,\vec{F}_z)$                                     & $\bar{S}$                      & $ (\vec{F}_x,\vec{F}_x \oplus \vec{F}_z)$                                                             \\
					\hline
					$ (\vec{F}_{x,1},\vec{F}_{z,1}|\vec{F}_{x,2},\vec{F}_{z,2})$ & $\overline{\text{CNOT}}_{1,2}$ & $ (\vec{F}_{x,1},\vec{F}_{z,1}\oplus \vec{F}_{z,2}|\vec{F}_{x,1} \oplus \vec{F}_{x,2},\vec{F}_{z,2})$ \\
					\hline
				\end{tabular}
			\end{center}
			\caption{A list of required classical processing operations on the remaining cumulative flag vector in case that a logical Clifford gate is performed between two FTQEC routines. With these operations, a flag FTQEC protocol with two-tailed adaptive time decoder or separated $X$ and $Z$ counting is applicable to any fault-tolerant Clifford computation, given that the CSS code is self-orthogonal.}
			\label{tab:flag_proc}
		\end{table*}

		\subsubsection{Separated X and Z counting}
		\label{ssec: separate xz}

		For any CSS code, $Z$-type and $X$-type errors can be corrected separately. It is possible to improve the number of measurements by separating the $X$-type and $Z$-type syndrome measurement rounds (which correspond to $X$-type and $Z$-type stabilizer generators). In this section, we introduce the XZ and ZX decoding strategies. In the XZ strategy, first, we execute a time decoder (which can be Shor, one-tailed, or two-tailed decoder) using only the $X$-type syndromes. The difference vector for this process is denoted by $\vec{\delta}_x$. After the decoder returns the $X$-type syndrome and the cumulative flag vectors for $Z$-type error correction, we estimate the number of faults $t_x$ that could cause $\vec{\delta}_x$; we define $\alpha_{\mathrm{all},x}$ to be the total number of non-overlapping $11$ sequences plus the total number of remaining 1s in $\vec{\delta}_x$, define $\mu_{\mathrm{all},x}$ to be the total number of nontrivial flag bits in $\vec{\delta}_x$, and let $t_x=\max(\alpha_{\mathrm{all},x},\mu_{\mathrm{all},x})$. Given that we spend this number of faults from our fault budget $t$, we can reduce the target number of faults in the stop condition for the $Z$-type syndrome measurements. Afterward, we run a time decoder for $Z$-type syndromes with the target number of faults $t_z = t-t_x$. The ZX strategy is similar to the XZ strategy, except that the $Z$-type generators are measured first.

		When the separated $X$ and $Z$ counting technique is applied to a flag FTQEC protocol, one can find syndromes for $Z$-type and $X$-type error corrections faster compared to a conventional method where the target numbers of faults for both types of error corrections are $t$. However, a drawback is that the flag FTQEC protocol will only be compatible with quantum memory. This is because of each type of error correction requires flag information of the opposite type. In particular, suppose that the time decoder for $X$-type syndrome measurements give syndrome $\vec{s}_x$ and cumulative flag vector $ \vec{F}_x$, and the time decoder for $Z$-type syndrome measurements give syndrome $\vec{s}_z$ and cumulative flag vector $ \vec{F}_z$. $Z$-type error correction will be done by applying a space decoder to $\vec{s}_x$ and the zero cumulative flag vector, while $X$-type error correction will be done by applying a space decoder to $\vec{s}_x$ and $ \vec{F}_z$. The cumulative flag vector $ \vec{F}_x$ which has not been used will be treated as the remaining cumulative flag vector of the current FTQEC routine and used as an initial flag vector for $Z$-type error correction in the next FTQEC routine.

		\subsubsection{Classical processing of the remaining cumulative flag vector}
		\label{ssec:remaining_flag_proc}

		One drawback of a flag FTQEC protocol that uses the two-tailed adaptive time decoder or the separated $X$ and $Z$ counting technique is that it is only applicable to a quantum memory, not a general fault-tolerant quantum computation. This is because the output error at the end of each FTQEC routine may correspond to a nontrivial cumulative flag vector. To correct such an error, one needs to pass the flag information from each FTQEC routine (the remaining cumulative flag vector) to the next FTQEC routine. However, if there is some quantum computation between two FTQEC routines (as in an extended rectangle \cite{AGP06}), the error will be transformed and may not be correctable if the corresponding flag information is not processed properly.

		Nevertheless, for any self-orthogonal CSS code, a flag FTQEC protocol with two-tailed adaptive time decoder or separated $X$ and $Z$ counting (or both) can made applicable to any fault-tolerant Clifford computation. For example, let us consider an application of a logical Hadamard gate $\bar{H}$ between two FTQEC routines. Suppose that the first FTQEC routine causes an output error $E_x\cdot E_z$ and the remaining cumulative flag vector is $ (\vec{F}_x,\vec{F}_z)$. Without a logical Hadamard gate, $E_x$ and $E_z$ can be corrected using $ \vec{F}_z$ and $ \vec{F}_x$, respectively. A logical Hadamard gate transforms an $X$-type error to a $Z$-type error of the same form, and vice versa. Because the $X$-type and $Z$-type generators are of the same form, possible fault combinations for both types of errors are also of the same form. To correct the transformed error $\bar{H}(E_x\cdot E_z)\bar{H}^\dagger$ in the second FTQEC routine, one needs to swap the $X$-type and $Z$-type cumulative flag vector; that is, the initial flag vector for the second FTQEC routine must be $ (\vec{F}_z,\vec{F}_x)$.

		We can apply similar ideas for flag information processing to logical $S$ and logical CNOT gates. The summary of the classical processing operations for logical $H$, $S$, and $\text{CNOT}$ gates is provided in \cref{tab:flag_proc}. Because $\{H,S,\text{CNOT}\}$ generates the Clifford group, a flag FTQEC protocol with two-tailed adaptive time decoder or separated $X$ and $Z$ counting is applicable to any fault-tolerant Clifford computation given that the CSS code is self-orthogonal. Note that the magic state distillation and injection \cite{BK05,BH12} use only Clifford operations. Thus, our techniques are also applicable to fault-tolerant universal quantum computation given that high-fidelity magic states are provided.

		\section{Numerical results}
		\label{sec:numerical results}

		\subsection{Methods} \label{subsec:methods}

		Our optimization tools for space and time decoders including the compact lookup table construction, the MIM technique, and the adaptive time decoders for flag FTQEC are applicable to any stabilizer code. However, we focus on a specific family of codes where the aforementioned tools can be simplified and extended techniques, including separated $X$ and $Z$ decoding and classical processing of flag information are applicable---the family of self-orthogonal CSS codes in which the number of physical qubits is odd, the number of logical qubits is 1, and logical $X$ and $Z$ operators are transversal. To evaluate the performance of our tools, we simulate FTQEC protocols on the \codepar{(3d^2 + 1)/4,1,d} hexagonal color codes \cite{BM06} of distance 3, 5, 7 and 9. These codes are planar topological codes with configurations displayed in \cref{fig: tcc}. For each code, stabilizer generators are measured using the syndrome extraction circuits with single flag ancilla, as depicted in \cref{fig:flag_w}. It was proven that for the hexagonal code of any distance, using flag circuits of this form preserves the code distance regardless of the gate orderings \cite{CKYZ20,TL22}). The simulation is implemented under the circuit-level depolarizing noise model specified in \cref{subsec:FTQEC_def}. As there is no idling noise in our error model, the syndromes can be extracted sequentially.

		To construct a lookup table for space decoding and to verify that our circuit configurations preserve the code distance, we implement the algorithm described in \cref{subsection:compact lookup} using C++. The timing for verification alongside the statistics of the lookup table can be found in \cref{tab: decoder numbers}. The lookup table for these codes can be generated on the fly before the sampling starts as the required time is low enough.

		Here we simulate the storage (i.e. the result of the logical identity operation) of the logical state $\ket{\bar{0}}$. We use of the Pauli frame simulator in Stim \cite{Gidney21} to collect measurement samples, and use Cirq \cite{Cirq22} for constructing the circuits with the given noise model. After a perfect preparation of $\ket{\bar{0}}$, we perform noisy error correction and recovery. In the error correction process, full rounds of syndrome measurements are repeated until the stop condition of the time decoder is satisfied. The time decoder returns an accepted full syndrome (consisting of error syndrome and cumulative flag vector), then the space decoder determines the recovery operation based on the accepted full syndrome. This recovery operation is applied to the data qubits afterwards. Finally, we apply an ideal error correction and determine whether the output error is a logical $X$ error (which corresponds to having $\ket{\bar{1}}$ as the output state).

		\begin{figure}[tbp]
			\centering
			\includegraphics[width=0.45\textwidth]{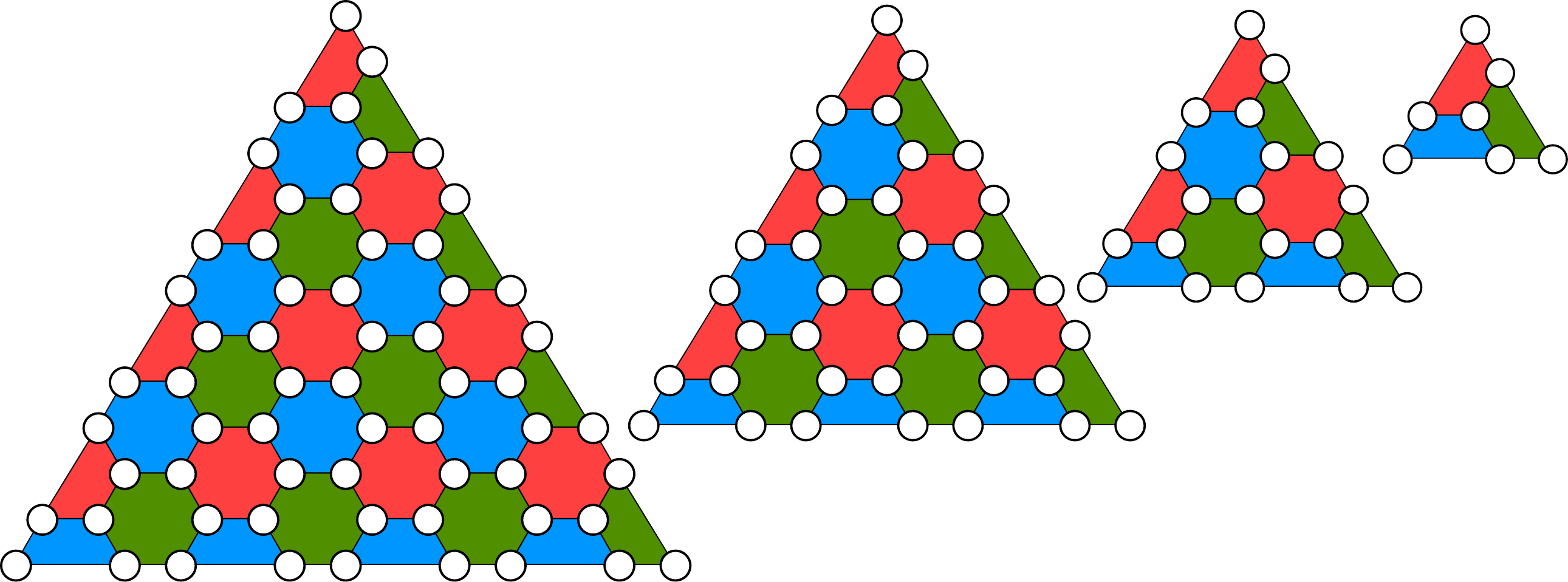}
			\caption{The studied members of the hexagonal color code, for distances 3, 5, 7, and 9 (right to left). Qubits are on the vertices and stabilizer generators are the plaquettes. As the codes are self-orthogonal CSS codes, both the $X$ and $Z$ stabilizer generators are described by the same layout.}
			\label{fig: tcc}
		\end{figure}

		\subsection{The overall effect of optimization tools}

		We first compare two protocols: (1) the FTQEC protocol with Shor time decoder without the MIM technique (the protocol in which none of our optimization tools are applied) and (2) the FTQEC protocol with the MIM technique and the two-tailed adaptive time decoder with the ZX strategy (the best FTQEC protocol in this work which is compatible with any Clifford computation on a self-orthogonal CSS code). The logical error rate $p_L$ vs physical error rate $p$ for hexagonal color codes of distance 3, 5, 7, and 9 are plotted in \cref{fig: threshold}. Our results show that for each code, applying the optimization tools can significantly improve the pseudothreshold (the intersection between each plot and the $p_L=2p/3$ line). Furthermore, the optimized decoder yields orders of magnitude improvements in the logical error rate in the $p=10^{-4}$ error regime.

		\begin{figure}[tbph]
			\centering
			\includegraphics[width=0.47\textwidth]{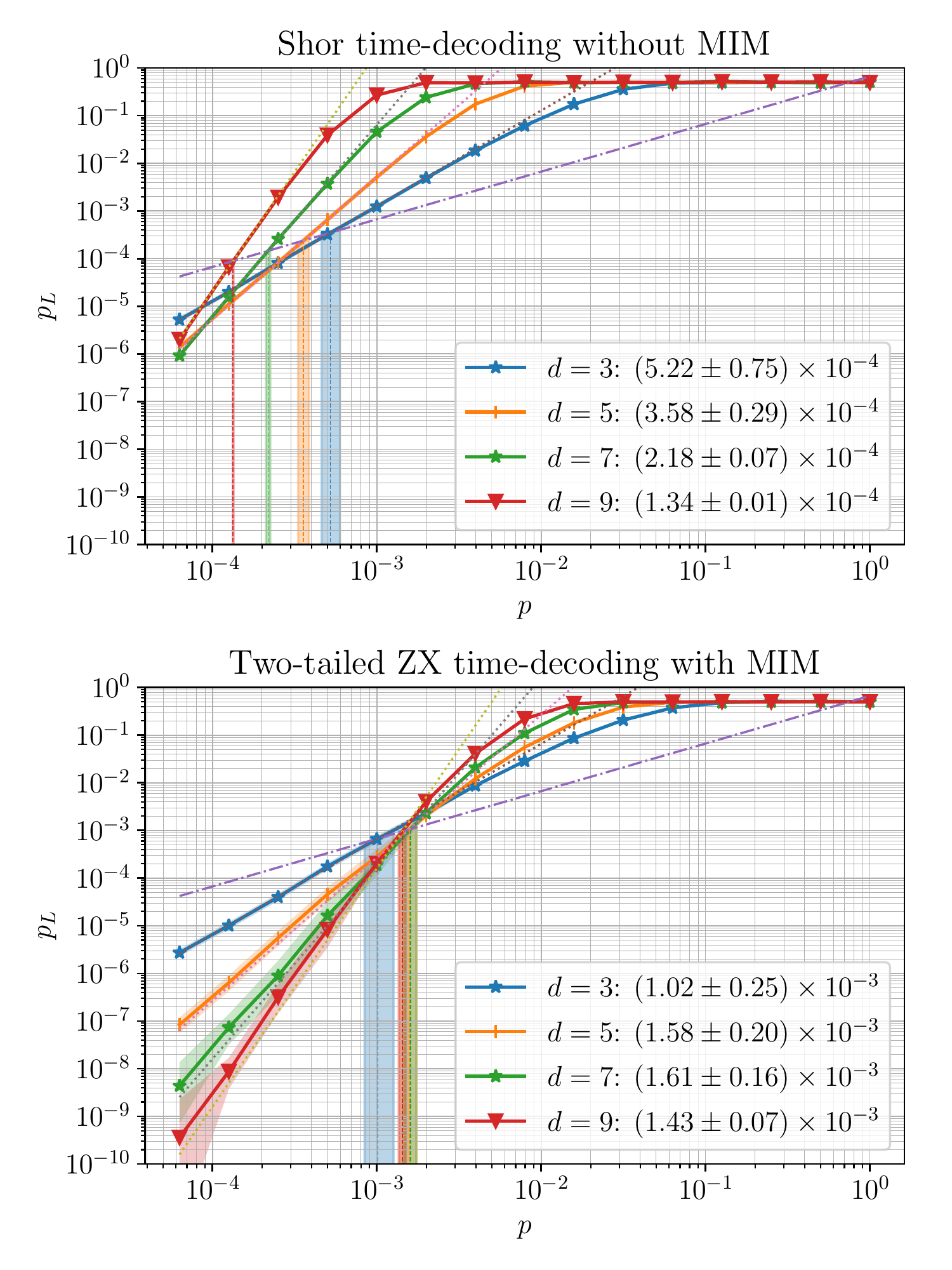}
			\caption{The upper plot shows the curve of logical error rate $p_L$ vs physical error rate $p$ for the hexagonal color code family without any of our optimization techniques, using the Shor time decoder without MIM. The lower plot uses the best-performing combination of our techniques, including MIM and the two-tailed adaptive time decoder with ZX strategy. Pseudothresholds for each curve (the $p_{th}$ error rate which gives $p_L(p_{th})=2p_{th}/3$) are included in the labels and marked with vertical lines. The data points represent the number of logical errors divided by the total number of samples at that $p$ error rate and thus estimate the true logical error rates, which should lie within the shaded areas with high confidence. The dotted helper lines, which are $\alpha p^{t+1}$ where $\alpha=\frac{2}{3}  p_{th}^{-t}$ retroactively calculated for each curve from its pseudothreshold, show good agreement with distance preservation.}
			\label{fig: threshold}
		\end{figure}

		Under a noise model parameterized by a single parameter $p$, the fault-tolerant threshold $p_{th}$ is the error probability under which the logical error rate is guaranteed to decrease with increasing code distance for a specific code family and decoder. Our decoders can yield a $p_{th}$ for concatenated code families using a level-by-level decoder, but they will not yield a threshold for topological code families for two reasons. The practical reason is that our space decoder that uses a lookup table is not scalable to the large $d$ limit. The fundamental reason is that the time decoder will always take $\vec{\delta}$ in which all bits are one when $d$ is large, because $\delta_j$ for each round will be 1 with a probability exponentially close to 1 for finite $p$. The space decoder then acts on the final state but lacks the information about correlations to properly correct it. This is why an efficient space-time decoder is critical for achieving $p_{th}$ for topological codes.

		We can define an effective threshold $\tilde{p}_{th}$ as the error rate below which increasing the code distance improves the logical error rate for this finite set of codes. The optimized protocol yields a $\tilde{p}_{th}=1.5\times10^{-3}$, while the unoptimized protocol yields $\tilde{p}_{th}=4.5\times10^{-5}$. We also note that the crossing point between the codes of distances $d$ and $d-2$ is dropping quickly with the unoptimized decoder, while it is stable for the optimized decoder over this code set. \cref{tab:d9-effects-summary} summarizes the effects of different optimization tools on the pseudothreshold of the $d=9$ color code. In the next sections, we further discuss the effect of each technique that can contribute to this improvement.

		\begin{table}[tbph]
			\centering
			\begin{tabular}{|l|c|c|}
				\hline
				Time decoder  & w/o MIM ($\times 10^{-4}$) & MIM ($\times 10^{-4}$) \\
				\hline
				\hline
				Shor          & $1.34 \pm 0.01$            & $2.79 \pm 0.01$        \\
				One-tailed    & $2.11 \pm 0.05$            & $3.91 \pm 0.26$        \\
				Two-tailed    & $3.38 \pm 0.17$            & $6.30 \pm 0.45$        \\
				Two-tailed XZ & --                         & $6.09 \pm 0.47$        \\
				Two-tailed ZX & --                         & $14.3 \pm 0.7$         \\
				\hline
			\end{tabular}

			\caption{The effect of different time decoders (rows) and the MIM technique (columns) on the pseudothreshold of the distance 9 hexagonal color code. See \cref{fig: mim rates,fig: sfa,fig: zx} for more details on the underlying data.}
			\label{tab:d9-effects-summary}
		\end{table}

		\subsection{The effect of the Meet-in-the-Middle technique} \label{subsec:MIM_effect}

		In this section, we evaluate the performance of simulated storage that uses the space decoder with and without the MIM technique. We explore the effect for distances 3, 5, 7, and 9, and compare the effect when the time decoder is Shor, one-tail, or two-tail time decoder. We observe a significant decrease in logical error rates and an improvement in pseudothreshold when the MIM technique is applied. We also find that the benefit increases with the code distance. In \cref{fig: mim rates}, we show the improvement for the code of distance 9 where the benefit is the largest. The results for codes of other distances are provided in \cref{fig: app mim shor rates,fig: app mim 1t rates,fig: app mim 2t rates}.

		It is clear that both non-adaptive (Shor) and adaptive time decoders benefit from the MIM technique. The pseudothreshold for the Shor time decoder increases by more than 100\%, from $(1.34 \pm 0.01) \times 10^{-4}$ to $(2.79 \pm 0.01) \times 10^{-4}$. The pseudothreshold for one-tailed adaptive time decoder increases by 85\%, from $(2.11 \pm 0.05) \times 10^{-4}$ to $(3.91 \pm 0.26) \times 10^{-4}$. Finally, the pseudothreshold for the two-tailed adaptive time decoder gets a boost of 76\%, from $(3.38 \pm 0.17) \times 10^{-4}$ to $(5.96 \pm 0.71) \times 10^{-4}$.

		\begin{figure}[tbph]
			\centering
			\includegraphics[width=0.47\textwidth]{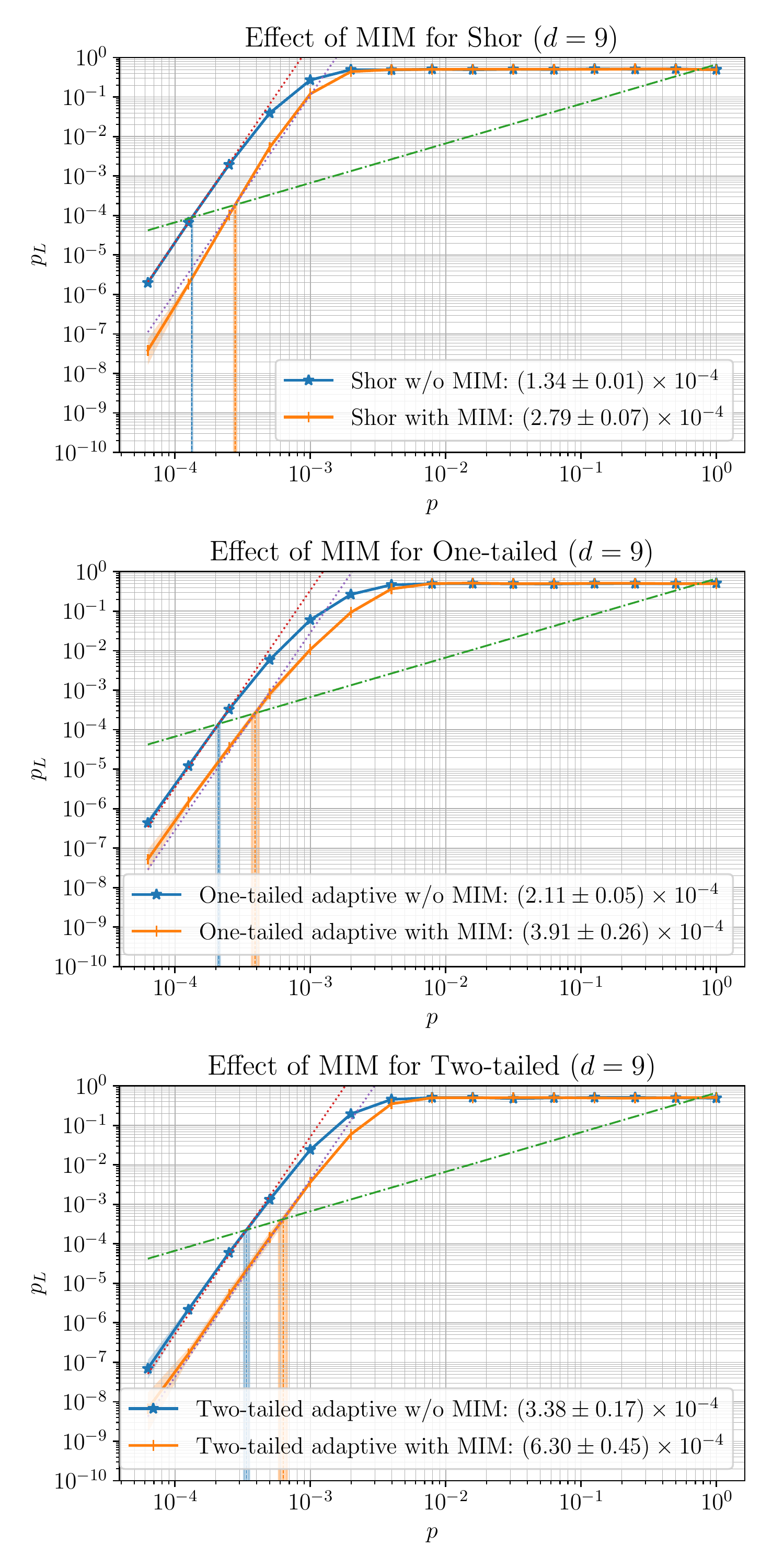}
			\caption{The effect of the MIM technique on different time decoders at distance 9. The effect is the largest for Shor time decoder, more than doubling the pseudothreshold. The MIM technique also gives at least a significant 70\% improvement on the adaptive time decoders.}
			\label{fig: mim rates}
		\end{figure}

		\subsection{The effect of the adaptive time decoders} \label{subsec:adaptive_effect}

		In this section, we compare the performance of the simulated storage numerical experiments that use different time decoders when the MIM technique is applied. The results are displayed in \cref{fig: sfa} for the hexagonal code of distance 9, and we refer the reader to \cref{fig: app sfa} in \cref{app: all} for the results for the codes of other distances.

		For the code of distance 9, in comparison with the Shor time decoder, the one-tailed adaptive time decoder improves the pseudothreshold by 40\% from $(2.79 \pm 0.07) \times 10^{-4}$ to $(3.91 \pm 0.07) \times 10^{-4}$. The two-tailed method achieves $(5.96 \pm 0.71) \times 10^{-4}$ pseudothreshold, which is more than a 100\% increase compared to the Shor time decoder. However, this gain vanishes at lower error rates, and the performances of Shor and one-tailed decoders become similar at around $p=10^{-4}$. It is not surprising as we expect all adaptive time decoders to converge to Shor time decoder at lower error rates. The main reason for this convergence is that the performance gains for the adaptive techniques come from a decrease in the average number of rounds for syndrome measurements, and the decrease converges to zero at low error rates. How fast the decrease converges does matter, and in contrast to the one-tailed approach, the two-tailed time decoder preserves its performance gain over Shor time decoder at the observed low-error regime as low as $5 \times 10^{-5}$.

		We also provide the plots of the average numbers of full rounds of measurements for all decoders. At a low-error-rate regime, all decoders have the same minimum number of measurement rounds, $t+1$, which corresponds to the case that all bits in the difference vector are zeros. We can see the separation more clearly when the physical error rate is in the $10^{-3}$ range; the two-tailed time decoder requires the fewest rounds, followed by the one-tailed decoder, and the Shor time decoder performs the worst. At the high-error-rate regime, all bits in the difference vector tend to be ones. In this case, the Shor time decoder requires $(t+1)^2$ rounds, while both one-tailed and two-tailed decoders require $2t+1$ rounds.

		\begin{figure*}[tbph]
			\centering
			\includegraphics[width=\textwidth]{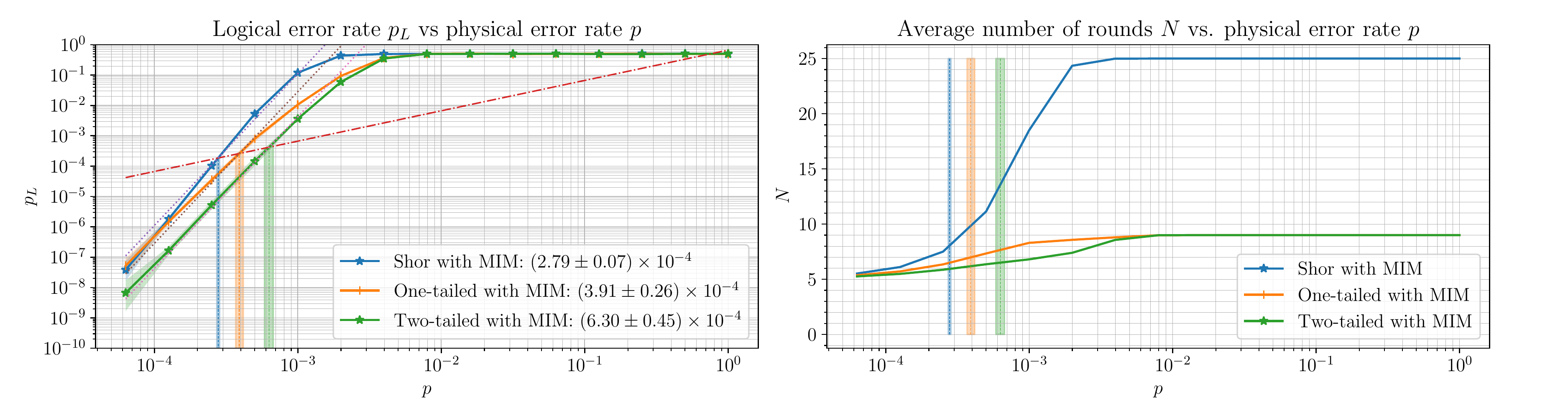}
			\caption{Logical error rates of one-tailed and two-tailed adaptive time decoders compared to the Shor time decoder (left) with corresponding average number of rounds (right) for the hexagonal color code of distance 9.}
			\label{fig: sfa}
		\end{figure*}

		\subsection{The effect of the separated X and Z counting technique} \label{subsec:XZ_effect}

		\begin{figure*}[tbph]
			\centering
			\includegraphics[width=\textwidth]{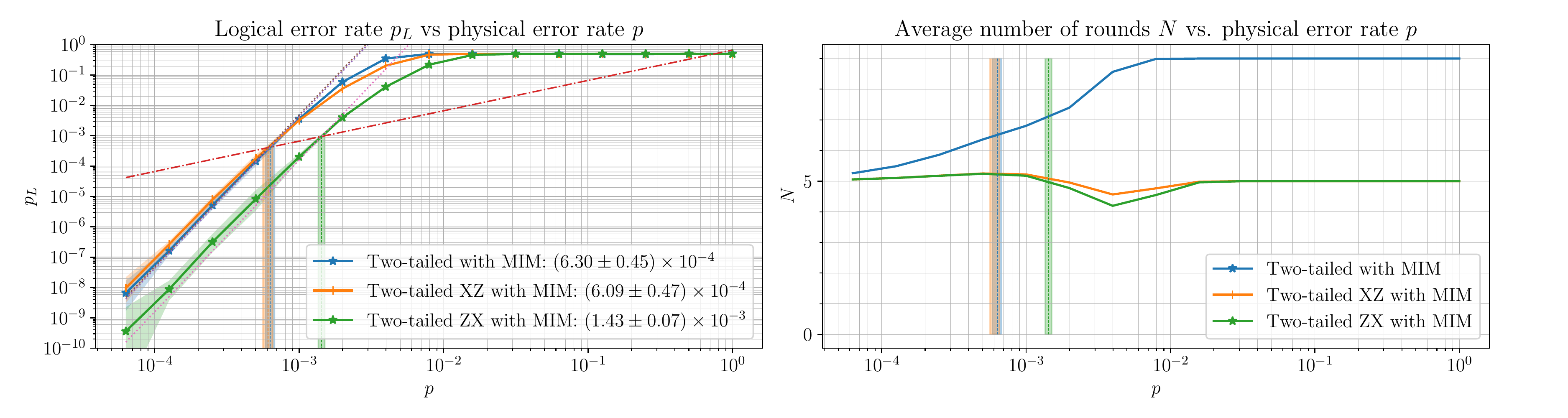}
			\caption{Logical error rates of the two-tailed time decoder with XZ and ZX strategies in comparison with the two-tailed adaptive time decoder with joint $X$ and $Z$ measurements (left) and corresponding average number of rounds (right) for the hexagonal color code of distance 9.}
			\label{fig: zx}
		\end{figure*}

		In this section, we observe the performance gains when the separated $X$ and $Z$ counting technique is applied. Here we compare the FTQEC protocols that use the two-tailed adaptive time decoder with joint $X$ and $Z$ generator measurements (as in \cref{ssec:two-tailed}), the two-tailed adaptive time decoder with XZ strategy, and the two-tailed adaptive time decoder with ZX strategy (as in \cref{ssec: separate xz}). The logical error rate is calculated from the number of samples in which the output error is a logical $X$ error. The $p_L$ versus $p$ plots for the code of distance 9 are shown in \cref{fig: zx} (the results for codes of other distances can be found in \cref{fig: app zx} in \cref{app: all}).

		In terms of the pseudothreshold, we observe that the decoder with separated $X$ and $Z$ counting performs the best when $Z$-type generators are measured before $X$-type generators. Compared to the two-tailed decoder with joint $X$ and $Z$ generator measurements, the separated two-tailed ZX decoder improves the pseudothreshold by 140\% from $(5.96 \pm 0.71) \times 10^{-4}$ to $(1.44 \pm 0.20) \times 10^{-3}$. This is mainly because measuring generators of the first type ($X$ or $Z$) requires more rounds, and it is more probable that the measurements can cause correlated errors of the same type as the generators being measured (which are more difficult to correct than uncorrelated errors since they require flag information). Because in our simulations we measure the performance of storing the logical $\ket{0}$ state (thus, a logical $X$ error is counted), the decoder that measures $X$-type generators first performs worse. We also observe that there is no significant difference between the two-tailed decoder with joint measurements and the two-tailed decoder with XZ strategy.

		We also provide plots of the average number of full rounds of measurements for all decoders (where the full round of single-type generator measurements is counted as half a round of total measurements). At the low-error-rate regime, all decoders require $t+1$ rounds. For the original two-tailed decoder, the average number of rounds increases as the physical error rate increases, and it reaches $2t+1$ rounds at the high-error-rate regime. For both two-tailed decoders with separated $X$ and $Z$ counting, we find that the average number of rounds increases near the pseudothreshold, then there are the dips after the pseudothreshold, and the numbers reach $t+1$ rounds at the high-error-rate regime. The dips come from the fact that the measurements of generators of the first type (either $X$ or $Z$) can stop at less than $(2t+1)/2$ rounds but the estimate of the number of occurred faults can be $t$, which then causes the measurements of generators of the second type to stop at $1/2$ rounds. At the high-error-rate regime, the decoders with separated $X$ and $Z$ counting require $t+1$ rounds since measuring generators of the first type requires $(2t+1)/2$ rounds while measuring generators of the second type requires $1/2$ rounds on average. Overall, the decoder that measures $Z$-type generators first performs better than the decoder that measures $X$-type generators first.

		\section{Discussions and conclusions} \label{sec:discussions}

		In this work, we focus on flag FTQEC with lookup table decoding and improvements to a decoder consisting of a time decoder and a space decoder. For the space decoder, we first develop a technique to build the lookup table more efficiently in \cref{subsection:compact lookup}. With our lookup table construction method, the lookup table for a self-orthogonal CSS code requires at least 87.5\% less memory compared to the lookup table for a generic stabilizer code. The construction method also verifies the distinguishability of the fault set corresponding to flag circuits for syndrome measurements. Our construction also leads to the notion of the fault code, a linear code corresponding to the faults under circuit-level noise, which simplifies the verification of the distance of the protocol. More efficient decoding schemes for the fault code can be an interesting avenue to explore in future work.

		Another optimization tool for space decoding is the MIM technique in \cref{subsec:MIM}, which could improve decoding accuracy when the number of faults in the protocol is greater than $t$ (where $t=\lfloor (d-1)/2 \rfloor$ for the code of distance $d$). The effect of the MIM technique on the simulated storage of the hexagonal color codes is discussed in \cref{subsec:MIM_effect} (see also \cref{fig: app mim shor rates}). We find that for any kind of time decoder, the logical error rates are reduced, and the pseudothresholds are improved when applying the MIM technique, with greater improvements at larger distances.

		For the time decoder, we generalize the adaptive syndrome measurement technique from the previous work \cite{TPB23} (which is applicable to Shor-style error correction \cite{Shor96}) to flag FTQEC, and develop one-tailed and two-tailed adaptive time decoders in \cref{subsec:adaptive_time_decoder}. For a general stabilizer code in which flag FTQEC is possible, the one-tailed decoder is preferable as it is compatible with any fault-tolerant quantum computation, while the two-tailed decoder is applicable to quantum memory only. Nevertheless, for self-orthogonal CSS codes, the two-tailed decoder is applicable to any fault-tolerant computation built from Clifford gates and application of $T$ gates by gate teleportation using high-fidelity magic states with the help of the classical processing technique on cumulative flag vectors developed in \cref{subsec:ext_CSS}. The effect of the adaptive time decoders on the simulated storage is discussed in \cref{subsec:adaptive_effect}. We observe that our adaptive time decoders can improve the pseudothresholds compared to the non-adaptive (Shor) time decoder while preserving the code distance. The two-tailed decoder also outperforms the one-tailed decoder.

		The two-tailed adaptive decoder without MIM in this work is similar to the adaptive strong decoder in the previous work \cite{TPB23}, except that this work uses flag circuits instead of syndrome extraction circuits with cat states. The numerical results show that using flag circuits results in a 20-35\% increase of the pseudothreshold for the hexagonal color codes of distances 3, 5, 7 and 9. This is mainly because flag circuits have fewer state preparation and qubit measurement locations, although they have more gates. The previous work \cite{TPB23} also assumes fault-tolerant preparation of cat states, which requires verification \cite{Shor96} or ancilla decoding circuit \cite{DA07} that can result in higher space and time overhead. Thus, the pseudothresholds could be worse in that case if additional requirements are also considered. It should be noted that flag circuits may not outperform syndrome extraction circuits with cat states in general, as flag FTQEC for other codes may require more complicated flag circuits.

		We can further improve the performance of adaptive time decoders on self-orthogonal CSS codes by using the separated $X$ and $Z$ counting technique described in \cref{subsec:ext_CSS}. Here, we estimate the number of faults occurred from the measurement of generators of the first type (either $X$ or $Z$) and then use that information in the measurement of generators of the second type. The effect of this technique can be found in \cref{subsec:XZ_effect}. When the logical $\ket{0}$ state is stored, we find that the protocol that measures $Z$-type generators before $X$-type generators performs the best. We see no significant difference in the protocol that measures $X$-type generators before $Z$-type generators, and the protocol that measures $X$-type and $Z$-type generators jointly. Thus, the separated $X$ and $Z$ counting provides an advantage only for certain input states depending on the measurement order.

		Combining all techniques together, we find a significant improvement in the pseudothreshold while the code distance is still preserved. For example, on the hexagonal color code of distance 9, the pseudothreshold goes up from $(1.34\pm 0.01) \times 10^{-4}$ to $(1.42 \pm 0.12) \times 10^{-3}$. We also find that in comparison with the unoptimized decoder, the crossing points between the codes of distances $d$ and $d-2$ come much closer when all techniques are applied (as shown in \cref{fig: threshold}), leading to a higher effective threshold $\tilde{p}_{th}$ for this set of codes.

		While our techniques are applicable to a broader family of codes, it would be interesting to see how our results compare with other works that study error decoding on the hexagonal color codes under circuit-level noise. For example, Baireuther et al. \cite{BCC19} reported a pseudothreshold above $2 \times 10^{-3}$ (against $p_L=p$ instead of $p_L=2p/3$) with a neural-network decoder, which also preserves the code distance empirically. However, it was also reported that training decoders for $d>7$ became too expensive. By adapting efficient color-decoding algorithms known as restriction decoder \cite{KD19} and projection decoder \cite{Delfosse_2014}, Chamberland et al. \cite{CKYZ20} and Beverland, Kubica and Svore \cite{BKS21} reported threshold values of $2\times 10^{-3}$ and $3.7 \times 10^{-3}$ respectively. The difference between threshold values is mostly contributed by different choices of syndrome extraction circuits: for each weight-six stabilizer generator, Ref. \cite{CKYZ20} used three flag qubits for connectivity considerations, while Ref. \cite{BKS21} did not use any flag qubits. However, both the restriction decoder and the projection decoder can only correct up to $d/3$ errors (see Fig. 15 in Sahay and Brown \cite{SB22} for example failure modes) on the color code family considered in this paper\footnote{See Appendix A of the 3rd arXiv version of \cite{CKYZ20}.}. Recent preprints report distance-losing schemes to decode the color code with even higher thresholds of $4.7 \times 10^{-3}$ \cite{ZWG23}, and between $5\times 10^{-3}$ to $7\times 10^{-3}$ \cite{GJ23} without using flag qubits.

		In contrast to the constructions that utilize the restriction decoder \cite{CKYZ20} and the projection decoder \cite{BKS21}, our adaptive decoding method preserves the code distance (although the lookup table is not scalable to codes with larger distances). It is expected that our method could become advantageous for the codes of interest when the physical error rate is below a certain value. However, the noise models in Refs. \cite{BCC19,CKYZ20,BKS21,ZWG23,GJ23} also consider idling noise, while our noise model does not. Sequential syndrome extraction is expected to perform poorly in architectures where idling noise is dominant (see \cref{app: idling noise} for an analysis on the \codepar{7,1,3} code). To improve performance, our methods need to be combined with optimized schedules specific to the given code family. CNOT schedule optimization is involved, requiring an enumeration of valid CNOT schedules satisfying basic constraints and finding the best-performing one using exhaustive search, similar to how Beverland, Kubica, and Svore\cite{BKS21} found a well-performing schedule for hexagonal color codes and bare ancillas. It is thus an open question what the error regime is where our flag qubit-based, adaptive methods are advantageous in comparison to the non-distance preserving decoders. This analysis will require evaluation using code-specific optimizations under different strength idling noise scenarios, which we leave for future work.

		Hierarchical decoding approaches also provide an interesting avenue to explore with lookup table-based and adaptive techniques \cite{D20,SBB23}. We conjecture that our techniques may result in efficient pre-decoders. The lookup tables and the adaptive syndrome algorithms would have to be restricted to local sections of topological codes or sparsely connected modules of other codes. Then, when the lookup table decoders cannot decode the local problems, the more expensive and accurate decoder can attempt to decode the nonlocal problem.

		It should be noted that this work uses the adaptive syndrome measurement technique, which assumes fast qubit preparation and measurement. For the architectures on which qubit measurement and reset are slow, however, our method may require a large number of ancillas or may not be possible. In that case, one may consider using the flag schemes that do not require fast qubit measurement and reset, such as the flag scheme for any distance-3 code \cite{PR21}, or the flag scheme in which the flag gadgets are constructed from the classical BCH codes \cite{AM22}.

		\section{Acknowledgements}

		The work was supported by the Office of the Director of National Intelligence - Intelligence Advanced Research Projects Activity through an Army Research Office contract (W911NF-16-1-0082), the Army Research Office Multidisciplinary University Research Initiative (MURI) program (W911NF-16-1-0349), the Army Research Office (W911NF-21-1-0005), and the National Science Foundation Institute for Robust Quantum Simulation (QLCI grant OMA-2120757).

		\bibliographystyle{ieeetr}
		\bibliography{bibtex_FT_Duke}

\begin{thebibliography}{10}

\bibitem{Shor96}
P.~W. Shor, ``Fault-tolerant quantum computation,'' {\em Proceedings., 37th
  Annual Symposium on Foundations of Computer Science}, pp.~56--65, 1996.

\bibitem{Knill98}
E.~Knill, R.~Laflamme, and W.~H. Zurek, ``Resilient quantum computation: error
  models and thresholds,'' {\em Proceedings of the Royal Society of London.
  Series A: Mathematical, Physical and Engineering Sciences}, vol.~454,
  no.~1969, pp.~365--384, 1998.

\bibitem{K97}
A.~Y. Kitaev, ``Quantum computations: algorithms and error correction,'' {\em
  Russian Mathematical Surveys}, vol.~52, p.~1191, Dec. 1997.

\bibitem{Preskill98}
J.~Preskill, ``Reliable quantum computers,'' {\em Proceedings of the Royal
  Society of London. Series A: Mathematical, Physical and Engineering
  Sciences}, vol.~454, no.~1969, pp.~385--410, 1998.

\bibitem{AB08}
D.~Aharonov and M.~Ben-Or, ``Fault-tolerant quantum computation with constant
  error rate,'' {\em SIAM Journal on Computing}, 2008.

\bibitem{AGP06}
P.~Aliferis, D.~Gottesman, and J.~Preskill, ``Quantum accuracy threshold for
  concatenated distance-3 codes,'' {\em Quantum Information and Computation},
  vol.~6, no.~2, pp.~97--165, 2006.

\bibitem{AKP06}
D.~Aharonov, A.~Kitaev, and J.~Preskill, ``Fault-tolerant quantum computation
  with long-range correlated noise,'' {\em Physical Review Letters}, vol.~96,
  p.~050504, Feb. 2006.

\bibitem{Steane03}
A.~M. Steane, ``Overhead and noise threshold of fault-tolerant quantum error
  correction,'' {\em Physical Review A}, vol.~68, no.~4, p.~042322, 2003.

\bibitem{PR12}
A.~Paetznick and B.~W. Reichardt, ``Fault-tolerant ancilla preparation and
  noise threshold lower bounds for the 23-qubit {G}olay code,'' {\em Quantum
  Information and Computation}, vol.~12, pp.~1034--1080, Nov. 2012.

\bibitem{CJL16b}
C.~Chamberland, T.~Jochym-O'Connor, and R.~Laflamme, ``Overhead analysis of
  universal concatenated quantum codes,'' {\em Physical Review A}, vol.~95,
  no.~2, p.~022313, 2017.

\bibitem{TYC17}
R.~Takagi, T.~J. Yoder, and I.~L. Chuang, ``Error rates and resource overheads
  of encoded three-qubit gates,'' {\em Physical Review A}, vol.~96, no.~4,
  p.~042302, 2017.

\bibitem{GE21}
C.~Gidney and M.~Eker{\aa}, ``How to factor 2048 bit {RSA} integers in 8 hours
  using 20 million noisy qubits,'' {\em Quantum}, vol.~5, p.~433, 2021.

\bibitem{BMTSHKLSSV22}
M.~E. Beverland, P.~Murali, M.~Troyer, K.~M. Svore, T.~Hoefler, V.~Kliuchnikov,
  G.~H. Low, M.~Soeken, A.~Sundaram, and A.~Vaschillo, ``Assessing requirements
  to scale to practical quantum advantage,'' {\em preprint arXiv:2211.07629},
  2022.

\bibitem{Gottesman97}
D.~Gottesman, {\em Stabilizer Codes and Quantum Error Correction}.
\newblock PhD thesis, California Institute of Technology, 1997.

\bibitem{Steane96b}
A.~Steane, ``Multiple-particle interference and quantum error correction,''
  {\em Proceedings of the Royal Society of London. Series A: Mathematical,
  Physical and Engineering Sciences}, vol.~452, no.~1954, pp.~2551--2577, 1996.

\bibitem{Knill05a}
E.~Knill, ``Scalable quantum computing in the presence of large detected-error
  rates,'' {\em Physical Review A}, vol.~71, no.~4, p.~042322, 2005.

\bibitem{TS14}
Y.~Tomita and K.~M. Svore, ``Low-distance surface codes under realistic quantum
  noise,'' {\em Physical Review A}, vol.~90, no.~6, p.~062320, 2014.

\bibitem{HB20}
S.~Huang and K.~R. Brown, ``Fault-tolerant compass codes,'' {\em Physical
  Review A}, vol.~101, no.~4, p.~042312, 2020.

\bibitem{LGDHB17}
M.~Li, M.~Guti{\'e}rrez, S.~E. David, A.~Hernandez, and K.~R. Brown, ``Fault
  tolerance with bare ancillary qubits for a [[7, 1, 3]] code,'' {\em Physical
  Review A}, vol.~96, no.~3, p.~032341, 2017.

\bibitem{BDPS13}
S.~Bravyi, G.~Duclos-Cianci, D.~Poulin, and M.~Suchara, ``Subsystem surface
  codes with three-qubit check operators,'' {\em Quantum Information \&
  Computation}, vol.~13, pp.~963--985, Nov. 2013.

\bibitem{LMB18}
M.~Li, D.~Miller, and K.~R. Brown, ``Direct measurement of bacon-shor code
  stabilizers,'' {\em Physical Review A}, vol.~98, p.~050301, Nov. 2018.

\bibitem{HiggottB21}
O.~Higgott and N.~P. Breuckmann, ``Subsystem codes with high thresholds by
  gauge fixing and reduced qubit overhead,'' {\em Physical Review X}, vol.~11,
  p.~031039, Aug. 2021.

\bibitem{LAR11}
A.~J. Landahl, J.~T. Anderson, and P.~R. Rice, ``Fault-tolerant quantum
  computing with color codes,'' {\em preprint arXiv:1108.5738}, 2011.

\bibitem{BKS21}
M.~E. Beverland, A.~Kubica, and K.~M. Svore, ``Cost of universality: A
  comparative study of the overhead of state distillation and code switching
  with color codes,'' {\em PRX Quantum}, vol.~2, no.~2, p.~020341, 2021.

\bibitem{ZWG23}
J.~Zhang, Y.-C. Wu, and G.-P. Guo, ``Facilitating practical fault-tolerant
  quantum computing based on color codes,'' {\em preprint quant-ph/2309.05222},
  2023.

\bibitem{DA07}
D.~P. DiVincenzo and P.~Aliferis, ``Effective fault-tolerant quantum
  computation with slow measurements,'' {\em Physical Review Letters}, vol.~98,
  no.~2, p.~020501, 2007.

\bibitem{YK17}
T.~J. Yoder and I.~H. Kim, ``The surface code with a twist,'' {\em {Quantum}},
  vol.~1, p.~2, Apr. 2017.

\bibitem{CR17a}
R.~Chao and B.~W. Reichardt, ``Quantum error correction with only two extra
  qubits,'' {\em Physical Review Letters}, vol.~121, no.~5, p.~050502, 2018.

\bibitem{CR20}
R.~Chao and B.~W. Reichardt, ``Flag fault-tolerant error correction for any
  stabilizer code,'' {\em PRX Quantum}, vol.~1, no.~1, p.~010302, 2020.

\bibitem{CB18}
C.~Chamberland and M.~E. Beverland, ``Flag fault-tolerant error correction with
  arbitrary distance codes,'' {\em {Quantum}}, vol.~2, p.~53, Feb. 2018.

\bibitem{TCL20}
T.~Tansuwannont, C.~Chamberland, and D.~Leung, ``Flag fault-tolerant error
  correction, measurement, and quantum computation for cyclic
  {C}alderbank-{S}hor-{S}teane codes,'' {\em Physical Review A}, vol.~101,
  no.~1, p.~012342, 2020.

\bibitem{CKYZ20}
C.~Chamberland, A.~Kubica, T.~J. Yoder, and G.~Zhu, ``Triangular color codes on
  trivalent graphs with flag qubits,'' {\em New Journal of Physics}, vol.~22,
  no.~2, p.~023019, 2020.

\bibitem{CZYHC20}
C.~Chamberland, G.~Zhu, T.~J. Yoder, J.~B. Hertzberg, and A.~W. Cross,
  ``Topological and subsystem codes on low-degree graphs with flag qubits,''
  {\em Physical Review X}, vol.~10, no.~1, p.~011022, 2020.

\bibitem{TL21}
T.~Tansuwannont and D.~Leung, ``Fault-tolerant quantum error correction using
  error weight parities,'' {\em Physical Review A}, vol.~104, no.~4, p.~042410,
  2021.

\bibitem{TL22}
T.~Tansuwannont and D.~Leung, ``Achieving fault tolerance on capped color codes
  with few ancillas,'' {\em PRX Quantum}, vol.~3, no.~3, p.~030322, 2022.

\bibitem{Zalka96}
C.~Zalka, ``Threshold estimate for fault tolerant quantum computation,'' {\em
  preprint arXiv:quant-ph/9612028}, 1996.

\bibitem{DR20}
N.~Delfosse and B.~W. Reichardt, ``Short shor-style syndrome sequences,'' {\em
  preprint arXiv:2008.05051}, 2020.

\bibitem{TPB23}
T.~Tansuwannont, B.~Pato, and K.~R. Brown, ``Adaptive syndrome measurements for
  {S}hor-style error correction,'' {\em {Quantum}}, vol.~7, p.~1075, Aug. 2023.

\bibitem{DKLP02}
E.~Dennis, A.~Kitaev, A.~Landahl, and J.~Preskill, ``Topological quantum
  memory,'' {\em Journal of Mathematical Physics}, vol.~43, no.~9,
  pp.~4452--4505, 2002.

\bibitem{CKYZ20_v3}
C.~Chamberland, A.~Kubica, T.~J. Yoder, and G.~Zhu, ``Triangular color codes on
  trivalent graphs with flag qubits,'' {\em preprint arXiv:1911.00355v3}, 2020.

\bibitem{BM06}
H.~Bombin and M.~A. Martin-Delgado, ``Topological quantum distillation,'' {\em
  Physical Review Letters}, vol.~97, no.~18, p.~180501, 2006.

\bibitem{CS96}
A.~R. Calderbank and P.~W. Shor, ``Good quantum error-correcting codes exist,''
  {\em Physical Review A}, vol.~54, no.~2, p.~1098, 1996.

\bibitem{DT14}
N.~Delfosse and J.-P. Tillich, ``A decoding algorithm for {CSS} codes using the
  {X/Z} correlations,'' in {\em 2014 IEEE International Symposium on
  Information Theory}, pp.~1071--1075, June 2014.

\bibitem{Steane97}
A.~M. Steane, ``Active stabilization, quantum computation, and quantum state
  synthesis,'' {\em Physical Review Letters}, vol.~78, no.~11, p.~2252, 1997.

\bibitem{HB21}
S.~Huang and K.~R. Brown, ``Between {Shor and Steane}: A unifying construction
  for measuring error syndromes,'' {\em Physical Review Letters}, vol.~127,
  no.~9, p.~090505, 2021.

\bibitem{Poulin_2006}
D.~Poulin, ``Optimal and efficient decoding of concatenated quantum block
  codes,'' {\em Physical Review A}, vol.~74, p.~052333, Nov 2006.

\bibitem{DLJ22}
P.~Das, A.~Locharla, and C.~Jones, ``{LILLIPUT}: a lightweight low-latency
  lookup-table decoder for near-term quantum error correction,'' in {\em
  Proceedings of the 27th ACM International Conference on Architectural Support
  for Programming Languages and Operating Systems}, (Lausanne Switzerland),
  pp.~541--553, ACM, Feb. 2022.

\bibitem{HB21b}
S.~Huang and K.~R. Brown, ``Constructions for measuring error syndromes in
  {Calderbank-Shor-Steane codes between Shor and Steane methods},'' {\em Phys.
  Rev. A}, vol.~104, p.~022429, Aug 2021.

\bibitem{Berent_Burgholzer_Derks_Eisert_Wille_2023}
L.~Berent, L.~Burgholzer, P.-J. H.~S. Derks, J.~Eisert, and R.~Wille,
  ``Decoding quantum color codes with maxsat,'' {\em preprint
  arXiv:2303.14237}, 2023.

\bibitem{Maskara_Kubica_Jochym-OConnor_2019}
N.~Maskara, A.~Kubica, and T.~Jochym-O’Connor, ``Advantages of versatile
  neural-network decoding for topological codes,'' {\em Physical Review A},
  vol.~99, p.~052351, May 2019.

\bibitem{Delfosse_2014}
N.~Delfosse, ``Decoding color codes by projection onto surface codes,'' {\em
  Physical Review A}, vol.~89, p.~012317, Jan 2014.

\bibitem{higgott2023sparse}
O.~Higgott and C.~Gidney, ``Sparse blossom: correcting a million errors per
  core second with minimum-weight matching,'' 2023.

\bibitem{TP13}
A.~M. Tillmann and M.~E. Pfetsch, ``The computational complexity of the
  restricted isometry property, the nullspace property, and related concepts in
  compressed sensing,'' {\em IEEE Transactions on Information Theory}, vol.~60,
  no.~2, pp.~1248--1259, 2013.

\bibitem{qecsim_2021}
D.~Tucket, ``qecsim is a python 3 package for simulating quantum error
  correction using stabilizer codes,'' Aug 2021.
\newblock https://github.com/qecsim/qecsim.

\bibitem{Korf_2003}
R.~E. Korf, ``Search techniques,'' in {\em Encyclopedia of Information Systems}
  (H.~Bidgoli, ed.), pp.~31--43, New York: Elsevier, 2003.

\bibitem{BK05}
S.~Bravyi and A.~Kitaev, ``Universal quantum computation with ideal {C}lifford
  gates and noisy ancillas,'' {\em Physical Review A}, vol.~71, no.~2,
  p.~022316, 2005.

\bibitem{BH12}
S.~Bravyi and J.~Haah, ``Magic-state distillation with low overhead,'' {\em
  Physical Review A}, vol.~86, no.~5, p.~052329, 2012.

\bibitem{Gidney21}
C.~Gidney, ``{Stim}: a fast stabilizer circuit simulator,'' {\em {Quantum}},
  vol.~5, p.~497, July 2021.

\bibitem{Cirq22}
{Cirq Developers}, ``Cirq v1.1.0,'' 2022.
\newblock {Full list of authors: https://github.com/quantumlib/Cirq, Zenodo,
  doi:10.5281/zenodo.7465577}.

\bibitem{BCC19}
P.~Baireuther, M.~Caio, B.~Criger, C.~W. Beenakker, and T.~E. O'Brien, ``Neural
  network decoder for topological color codes with circuit level noise,'' {\em
  New Journal of Physics}, vol.~21, no.~1, p.~013003, 2019.

\bibitem{KD19}
A.~Kubica and N.~Delfosse, ``Efficient color code decoders in {$d \geq 2$}
  dimensions from toric code decoders,'' {\em preprint arXiv:1905.07393}, 2019.

\bibitem{SB22}
K.~Sahay and B.~J. Brown, ``Decoder for the triangular color code by matching
  on a {Möbius} strip,'' {\em PRX Quantum}, vol.~3, p.~010310, Jan. 2022.

\bibitem{GJ23}
C.~Gidney and C.~Jones, ``New circuits and an open source decoder for the color
  code,'' {\em preprint arXiv:2312.08813}, 2023.

\bibitem{D20}
N.~Delfosse, ``Hierarchical decoding to reduce hardware requirements for
  quantum computing,'' {\em preprint arXiv:2001.11427}, 2020.

\bibitem{SBB23}
S.~C. Smith, B.~J. Brown, and S.~D. Bartlett, ``Local predecoder to reduce the
  bandwidth and latency of quantum error correction,'' {\em Physical Review
  Applied}, vol.~19, p.~034050, Mar. 2023.

\bibitem{PR21}
P.~Prabhu and B.~W. Reichardt, ``Fault-tolerant syndrome extraction and cat
  state preparation with fewer qubits,'' {\em preprint arXiv:2108.02184}, 2021.

\bibitem{AM22}
B.~Anker and M.~Marvian, ``Flag gadgets based on classical codes,'' {\em
  preprint arXiv:2212.1073}, 2022.

\end{thebibliography}

		\clearpage
		\onecolumngrid
		\appendix

		\color{defaultcolor}
		\section{Memory footprint savings in the lookup table}\label{app:memory footprint}

		In this section, we detail how much savings each of the ideas in the main text contributes relative to the lookup table memory cost of
		\begin{align}
			M_{stab}=T_{stab}(4n-2k) \text{ bits, } \label{eq:mstab}
		\end{align}
		when we look at a code as a generic stabilizer code. If CROs and the logical class are used instead of full Pauli operators for recovery, then instead of storing the full $2n$ bits for recovery, only $2k$ bits are required. Thus, $M_{stab,CRO}=T_{stab}(2n)$ leading to:
		\begin{align}
			M_{stab,CRO}/M_{stab}=\frac{T_{stab}(2n)}{T_{stab}(4n-2k)} = \frac{1}{2-k/n} = \frac{1}{2-R},
		\end{align}
		where $R=k/n$, the encoding rate. Thus, for codes with $R\rightarrow 0$ as $n \rightarrow \infty$, this alone saves up to 50\% in storage.

		For CSS codes, as mentioned in the main text, we assume independent recovery for $X$ and $Z$ type errors. Let $T_X$ and $T_Z$ denote the number of unique $X$- and $Z$-type non-trivial syndromes (which are obtained by measuring $X$- and $Z$-type generators, respectively). Then, CSS codes have two lookup tables mapping pure $X$- and $Z$-type syndromes to purely $Z$- and $X$-type recovery operators, with $T_X+1$ and $T_Z+1$ entries. Thus, they require only $2r_X$ and $2r_Z$ bits for the map key (factor of two for flags and data bits) and $n$ bits for the recovery operator. This results in the memory cost for the lookup tables of a CSS code:
		\begin{align}
			M_{CSS} & = (T_X+1)(2 r_X + n) + (T_Z+1)(2 r_Z + n)  \text{ bits }                  \nonumber   \\
			        & = 2 r_X T_X +2 r_X + nT_X + n + 2 r_Z T_Z +2 r_Z + nT_Z + n   \text{ bits } \nonumber \\
			        & = 2 (r_X T_X + r_Z T_Z ) + 2 (r_X+r_Z) + n(T_X + T_Z) + 2n   \text{ bits }  \nonumber \\
			        & = 2(r_Z T_Z +r_X T_X)   + (4n-2k) + n(T_Z + T_X) \text{ bits }
		\end{align}

		To compare this with $M_{stab}$, we need to take care of the trivial syndrome and introduce a variable, $T_{XZ}$, for the number of unique mixed $X$/$Z$ syndromes that a generic stabilizer code representation would yield:
		\begin{align}
			T_{stab} = T_X + T_Z + 1 + T_{XZ}
		\end{align}
		Thus, from \cref{eq:mstab}:
		\begin{align}
			M_{stab} = (T_X + T_Z + 1 + T_{XZ})(4n-2k) \text{ bits }
		\end{align}

		The ratio between the storage cost for a CSS code versus the cost when the same code is viewed as a generic stabilizer code is hard to bound precisely. Nevertheless, at least we know that $M_{stab} > M_{CSS}$ as the savings are,
		\begin{align}
			M_{stab} - 	M_{CSS} & = (T_X + T_Z + 1 + T_{XZ})(4n-2k) - (T_X+1)(2 r_X + n) + (T_Z+1)(2 r_Z + n) \nonumber \\
			                   & = 2(T_Xr_Z + T_Zr_X) + n(T_X + T_Z) + 2 T_{XZ}(2n-k) \text{ bits. }
		\end{align}

		If we use CROs, we can reduce the size of the map values to $k$ bits from $n$:
		\begin{align}
			M_{CSS,CRO} & = (T_X+1)(2 r_X + k) + (T_Z+1)(2 r_Z + k) \text{ bits }                    \nonumber  \\
			            & = 2 r_X T_X +2 r_X + kT_X + k + 2 r_Z T_Z +2 r_Z + kT_Z + k   \text{ bits } \nonumber \\
			            & = 2 (r_X T_X + r_Z T_Z ) + 2 (r_X+r_Z) + k(T_X + T_Z) + 2k   \text{ bits } \nonumber  \\
			            & = 2(r_X T_X +r_Z T_Z)   + 2n + k(T_Z + T_X)  \text{ bits }
		\end{align}
		And thus, not surprisingly, the decrease in bits is:
		\begin{align}
			M_{CSS} - M_{CSS,CRO} = (n-k)(2 + T_Z + T_X) = (n-k) T_{CSS}
		\end{align}
		where $T_{CSS}$ is the total number of entries of the two tables.

		If the code is self-orthogonal, then the two tables coincide, $T:=T_X=T_Z$, $r:=r_X=r_Z$. Thus,
		\begin{align}
			M_{CSS,CRO,SO} & = \frac{1}{2} \left( 2(r_Z T_Z +r_X T_X)   + 2n + k(T_Z + T_X) \right) \nonumber \\
			               & =\frac{1}{2} \left( 2(r T +r T)   + 2n + k(T + T ) \right)            \nonumber  \\
			               & =\frac{1}{2} \left( 4rT   + 2n + 2kT \right)                         \nonumber   \\
			               & = 2rT   + n + kT                                                      \nonumber  \\
			               & = (n-k)T   + n + kT                                                   \nonumber  \\
			               & = nT   + n                                                           \nonumber   \\
			               & = n(T+1),
		\end{align}
		which is consistent with having $T+1$ entries, with a map key of $n-k$ bits (with $(n-k)/2$ for flags and for generator bits) and a value with $k$ bits. If we would view the self-orthogonal CSS code as a generic stabilizer code, we would get $T_{stab}=1+2T+T_{XZ}$ and thus the ratio is,
		\begin{align}
			M_{CSS,CRO,SO}/M_{stab} & = \frac{n(T+1)}{T_{stab}(4n-2k)}     \nonumber  \\
			                        & = \frac{n(T+1)}{(1+2T+T_{XZ})(4n-2k)} \nonumber \\
			                        & = \frac{T+1}{(1+2T+T_{XZ})(4-2R)}    \nonumber  \\
			                        & \leq \frac{1}{8-4R},
		\end{align}
		where we used the fact that $T_{XZ}$ must be at least 1 for $t>0$ and a non-trivial encoding. This upper bound means that at a zero rate code, leveraging the structure of a self-orthogonal CSS code and the CROs can create a memory footprint less than 12.5\% that of the memory footprint of a lookup table if we viewed the code as a stabilizer code.

		\color{defaultcolor}
		\section{The effect of idling noise} \label{app: idling noise}

		To demonstrate the effect of idling noise, we evaluate the \codepar{7,1,3} code under a naive interleaved schedule, depicted in \cref{sub@subfig:INT_noiseless} without noise terms and in \cref{sub@subfig:INT_WI2} with gate noise terms with strength $p=0.02$ and idling noise terms with strength $p_I=0.01$. Note that further improvements are possible to reduce idling in the circuit by doubling the number of flag qubits and ancilla qubits and measuring $X$ and $Z$ stabilizer generators in parallel, similar to the scheme by Beverland, Kubica, and Svore \cite{BKS21}. This will, however, be only possible for the two-tailed adaptive decoder, and the separated $X$/$Z$ decoder will not work by definition. Also, protocol-specific CNOT schedule optimization might be possible depending on the underlying quantum code. As we are not aiming to find tools on the code level, this investigation is out of scope of this paper. It is also interesting to point out that using a single flag qubit and single ancilla forces sequential execution of the gates within a generator, while multi-flag based schemes such as in the work of Chamberland et al. \cite{CKYZ20} allow for multiple CNOT gates to be executed in the same time step. While our methods here use single flag qubit-based protocols, that angle can be relaxed if the strength of idling noise requires it.

		Our numerical evaluation results displayed in \cref{fig: app idling} show that at idling noise strength $p_I=p$, the pseudothreshold is 20 to 25 times smaller than the case without idling noise $p_I=0$. However, as the relative strength of the depolarizing noise $p/p_I$ increases, the performance approaches the ideal case rapidly. Furthermore, we can see that our decoders still preserve the distance, which is expected given that the single qubit depolarizing noise terms do not change the set of errors to be corrected but only change the strength of some terms.

\begin{figure*}[htpb!]
	\centering
	\includegraphics[width=.8\textwidth]{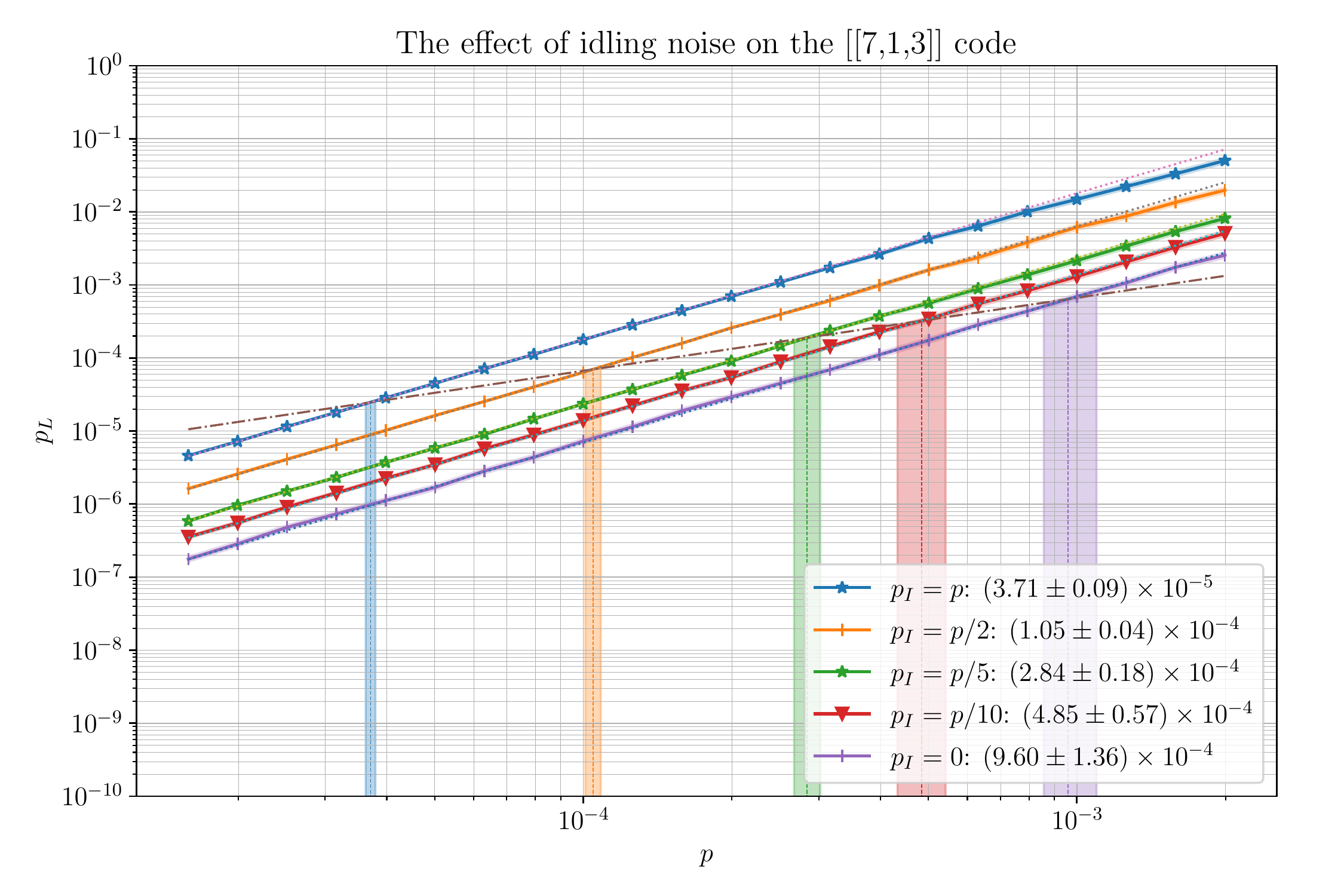}
	\caption{The effect of idling noise on a naive CNOT schedule for the \codepar{7,1,3} code at different idling noise strength $p_I$ relative to the gate errors $p$. In this setup $p_I=p$ is the full, standard depolarizing noise model, and $p_I=0$ is the one we used to evaluate our methods in the main text, while $p_I=p/2, p_I=p/5$ and $p_I=p/10$ are between those two extremes.}
	\label{fig: app idling}
\end{figure*}

\begin{figure*}
	\begin{subfigure}[b]{1\textwidth}
		\includegraphics[width=.5\textwidth]{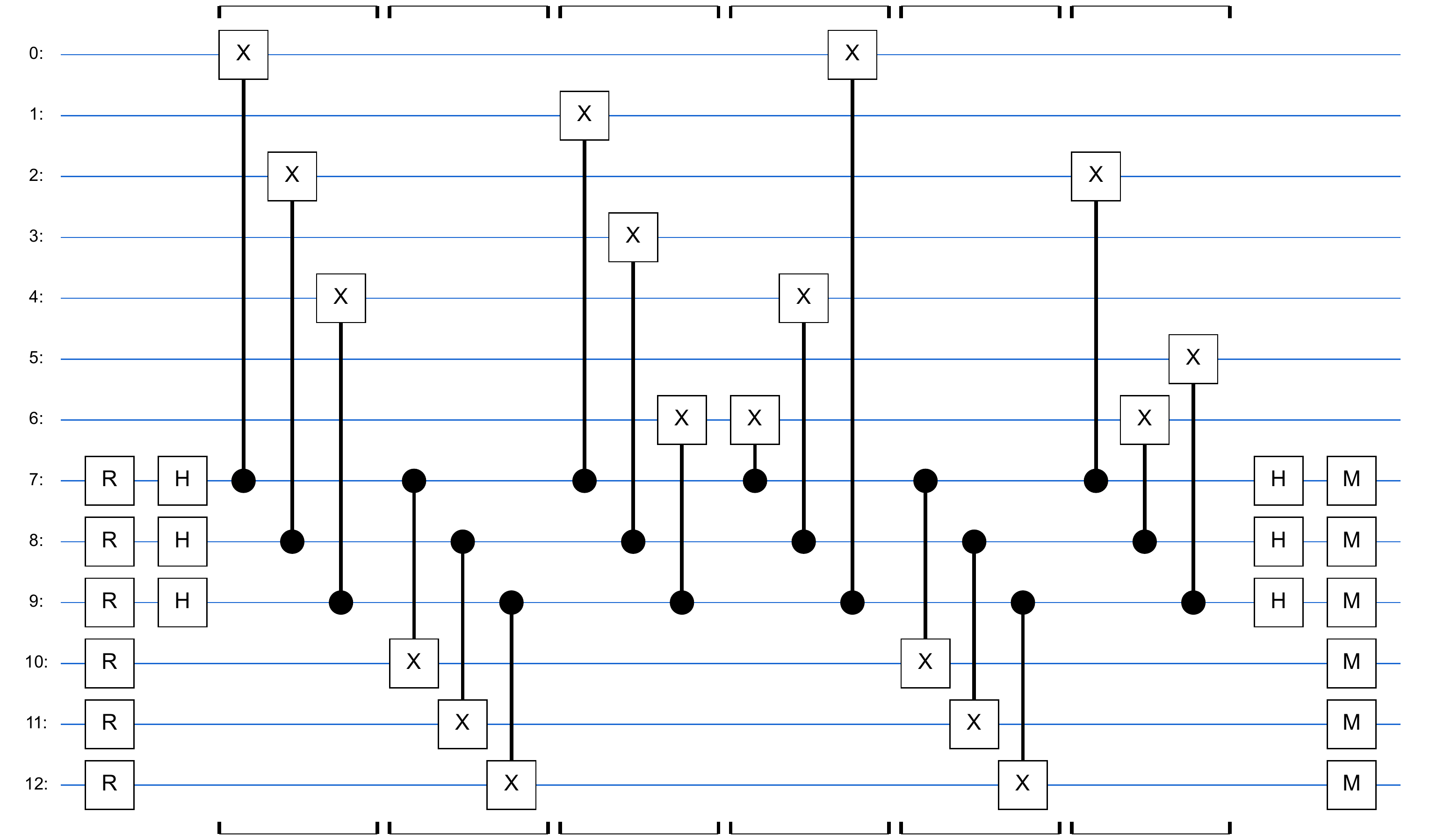}
		\captionsetup{justification=centering}
		\caption{}
		\label{subfig:INT_noiseless}
	\end{subfigure}
	\begin{subfigure}[b]{1\textwidth}
		\includegraphics[width=1\textwidth]{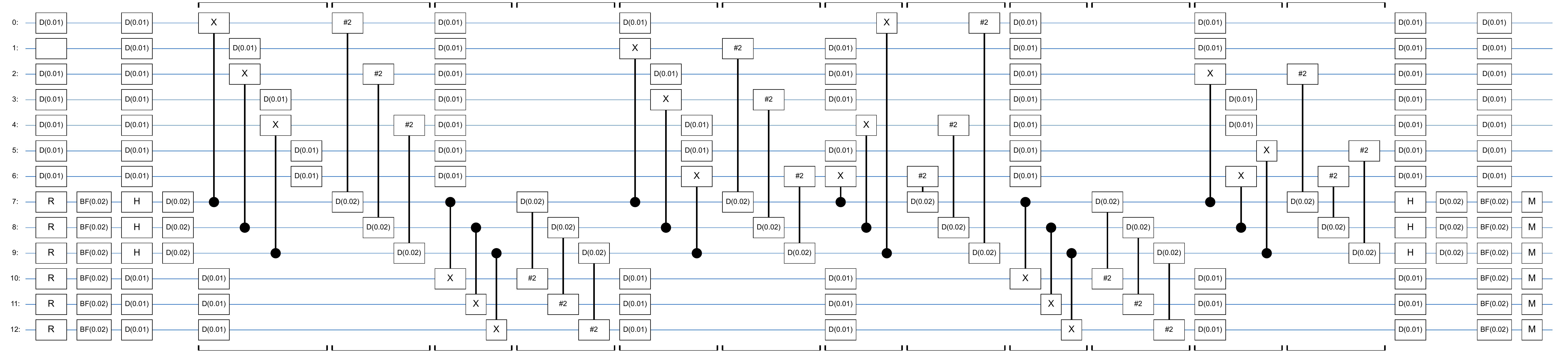}
		\captionsetup{justification=centering}
		\caption{}
		\label{subfig:INT_WI2}
	\end{subfigure}
	\caption{An interleaved schedule of extracting the $Z$ syndrome of the \codepar{7,1,3} code without (a) and with (b) noise terms at gate depolarizing strength $p=0.02$ and idling noise strength $p_I=p/2=0.01$. Data qubits are 0 to 6, ancilla qubits are 7 to 9 and flag qubits are 10 to 11. Brackets above and below the circuit group gates together that are executed during the same time step. $X$-type syndrome extraction is similar.}
	\label{fig:INT}
\end{figure*}

\pagebreak
\color{black}

\section{Figures for all distances}\label{app: all}

\begin{figure}[h!]
	\centering
	\includegraphics[width=1\textwidth]{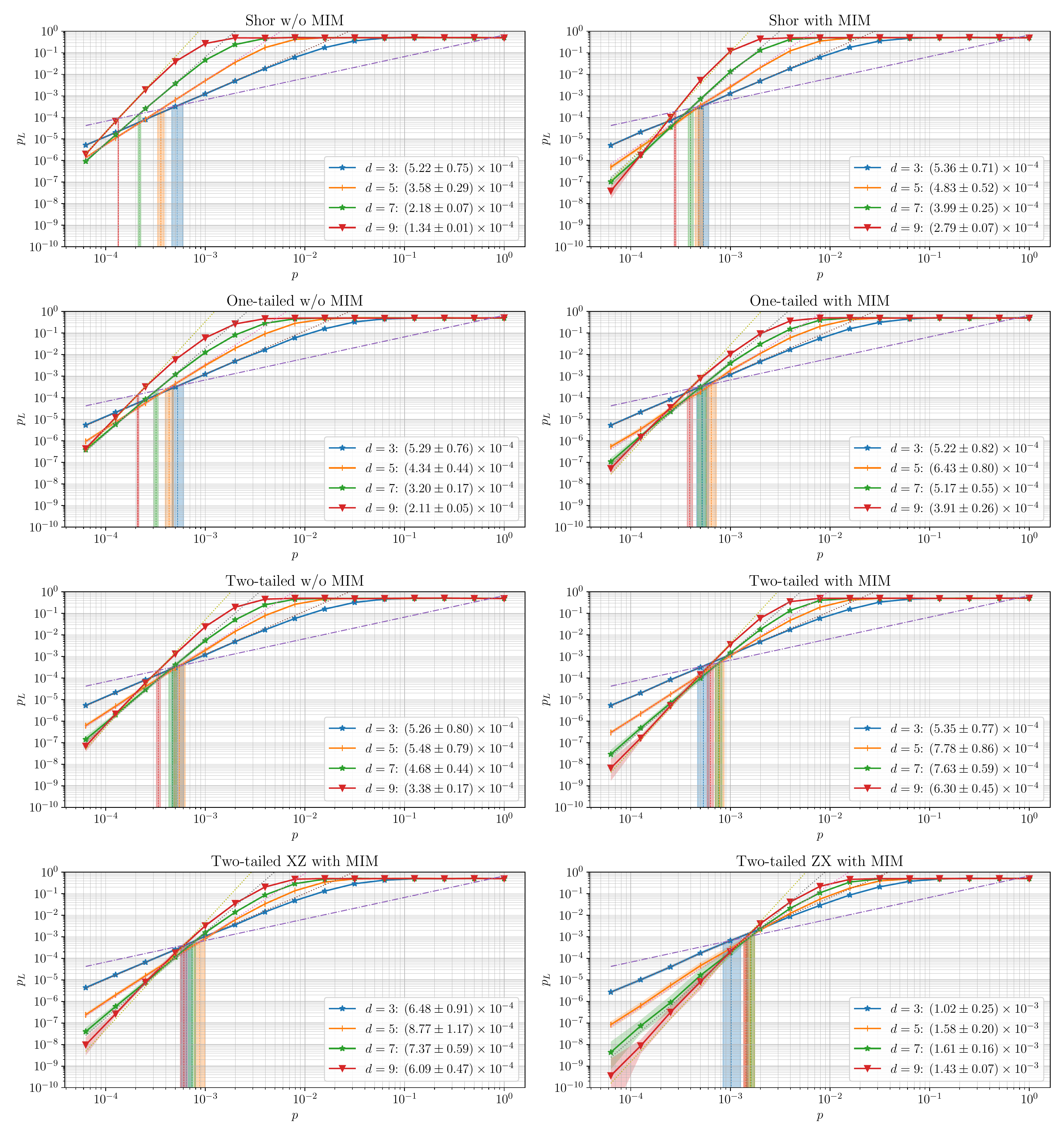}
	\caption{The threshold formation effect of increasingly better space and time decoders. Both space decoding improvement (MIM) and time decoding improvements (from Shor to two-tailed ZX-strategy) help in making the intersections of the $p_L$ vs $p$ curves more focused.}
	\label{fig: app thresholds}
\end{figure}

\begin{figure*}[tbph]
	\centering
	\includegraphics[width=1.0\textwidth]{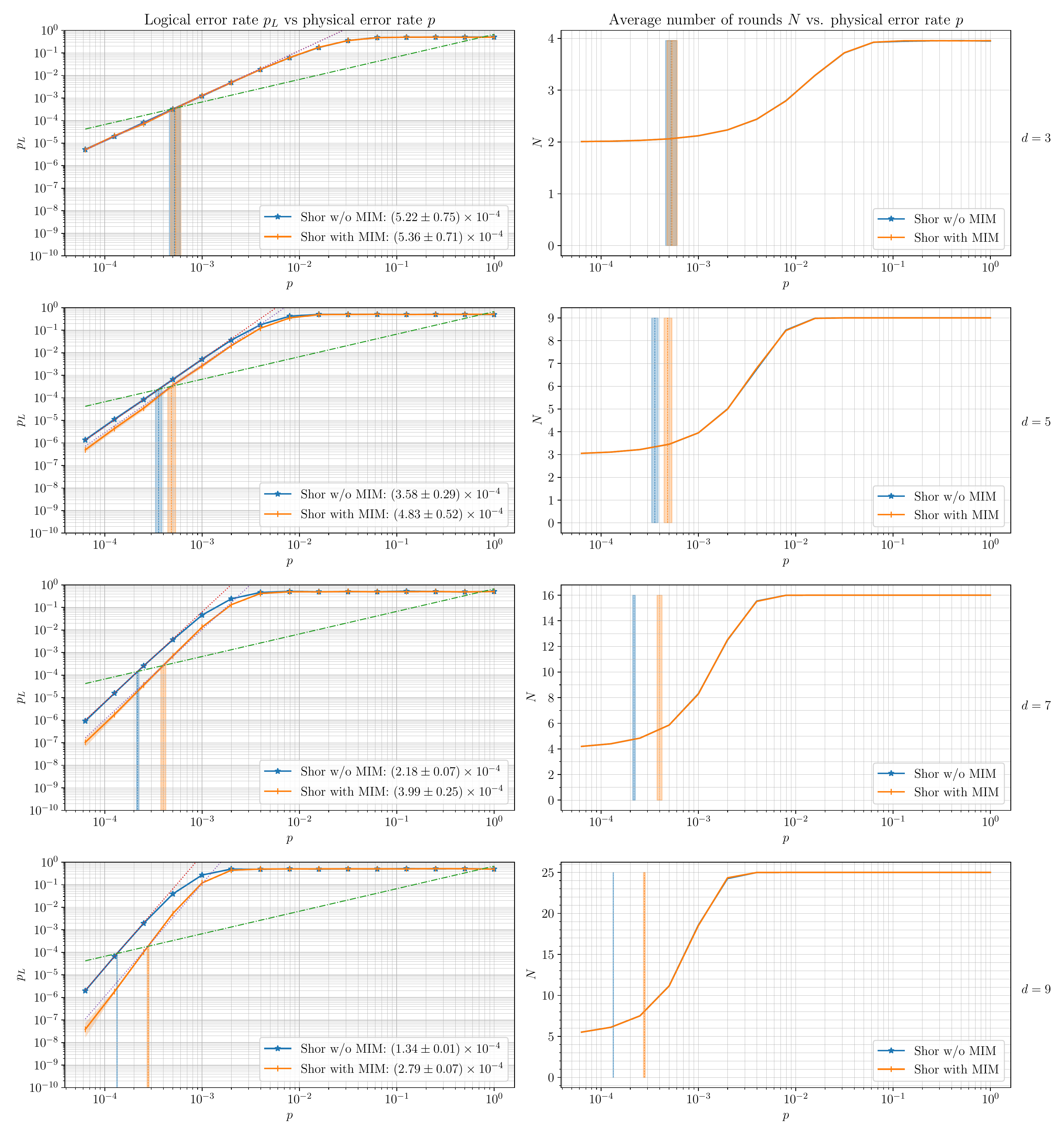}
	\caption{The effect of the MIM technique on Shor time decoder for hexagonal color codes of distances 3, 5, 7, and 9. The improvement is increasing with the code distance, with no improvement at $d=3$ and the biggest one at $d=9$.}
	\label{fig: app mim shor rates}
\end{figure*}

\begin{figure*}[tbph]
	\centering
	\includegraphics[width=1.0\textwidth]{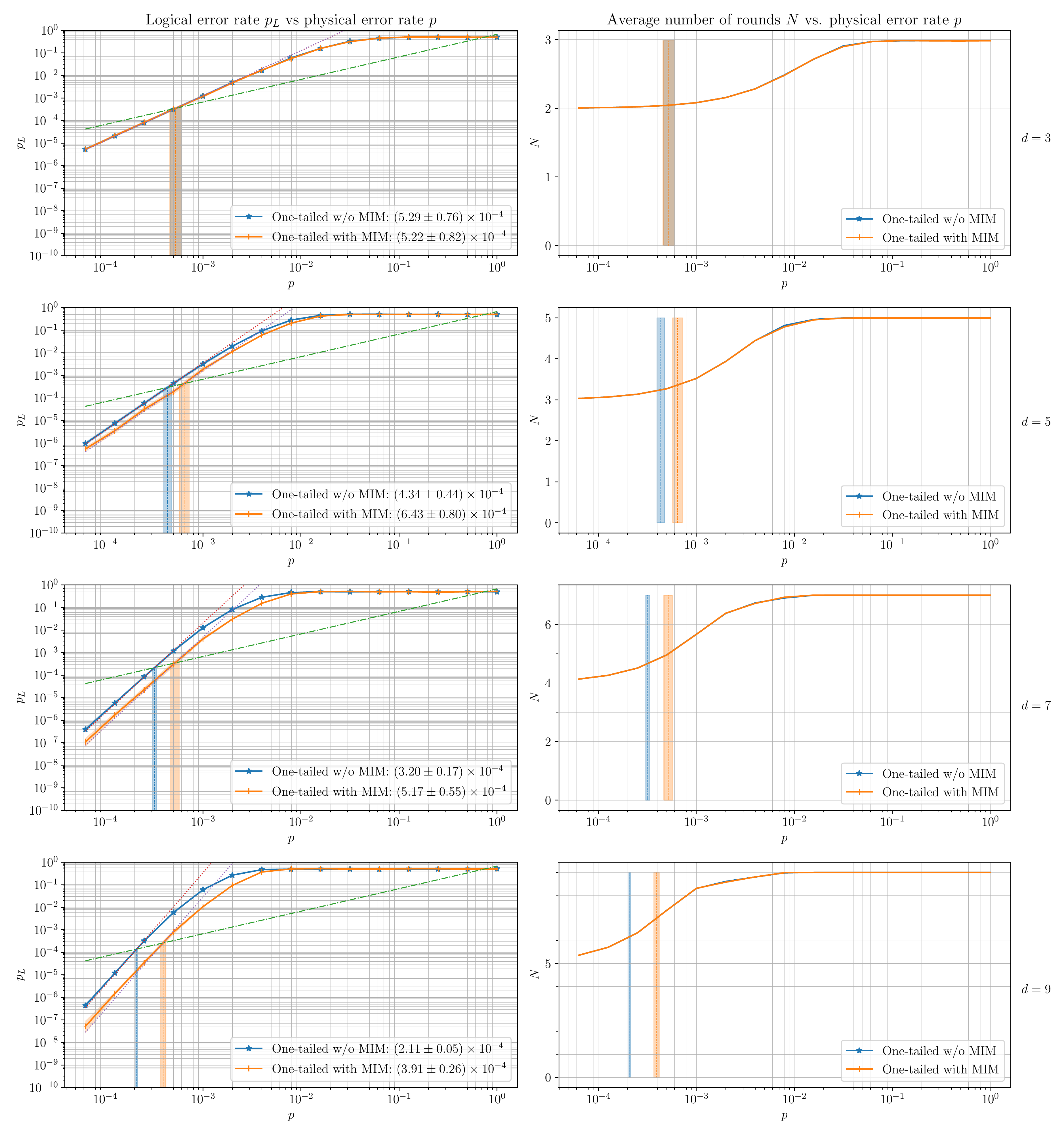}
	\caption{The effect of the MIM technique on one-tailed adaptive time decoder for hexagonal color codes of distances 3, 5, 7, and 9. The improvement is increasing with distance, with no improvement at $d=3$ and the biggest one at $d=9$.}
	\label{fig: app mim 1t rates}
\end{figure*}

\begin{figure*}[tbph]
	\centering
	\includegraphics[width=1.0\textwidth]{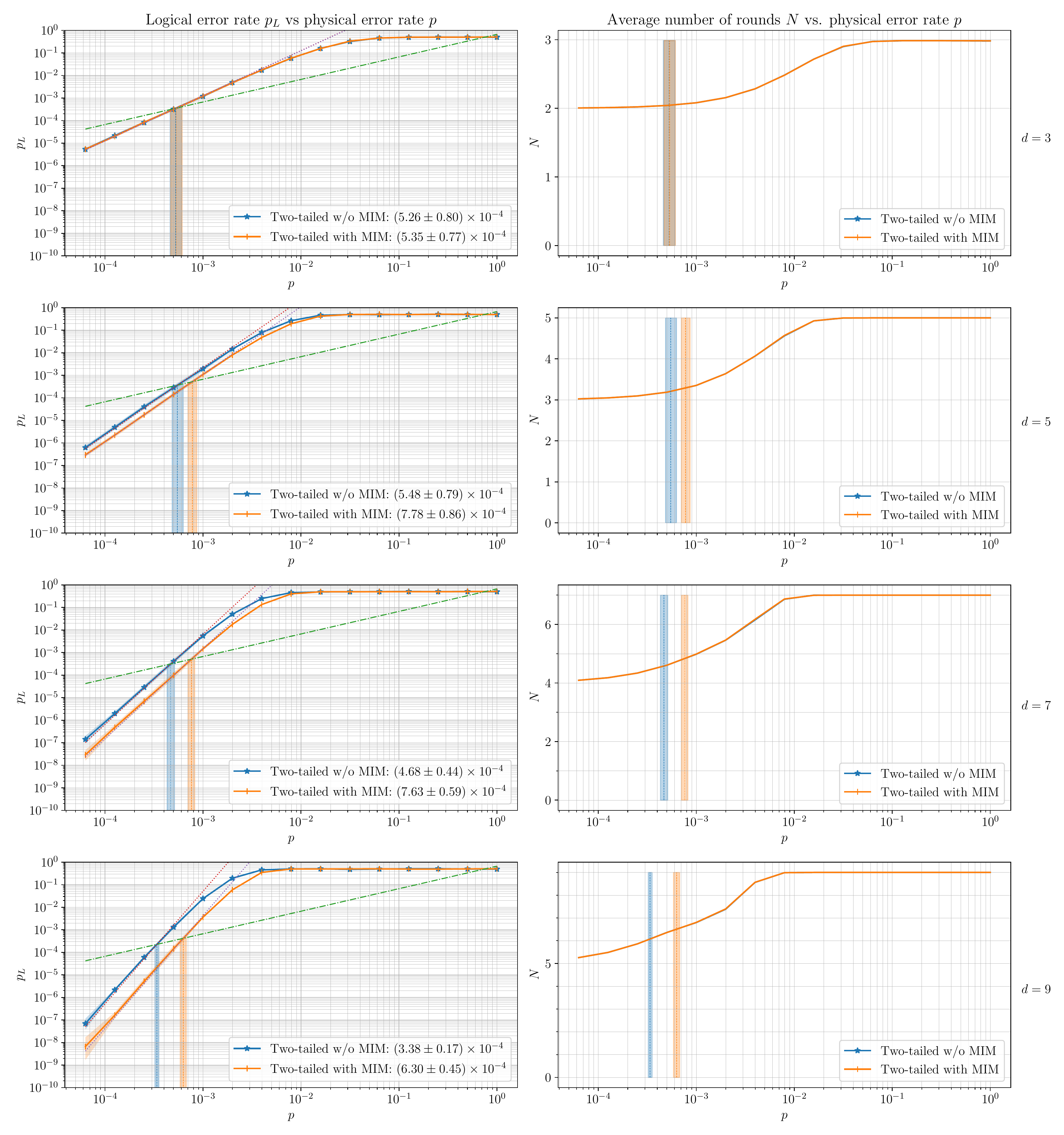}
	\caption{The effect of the MIM technique on two-tailed decoder for hexagonal color codes of distances 3, 5, 7, and 9. The improvement is increasing with distance, with no improvement at $d=3$ and the biggest one at $d=9$.}
	\label{fig: app mim 2t rates}
\end{figure*}

\begin{figure*}[tbph]
	\centering
	\includegraphics[width=1\textwidth]{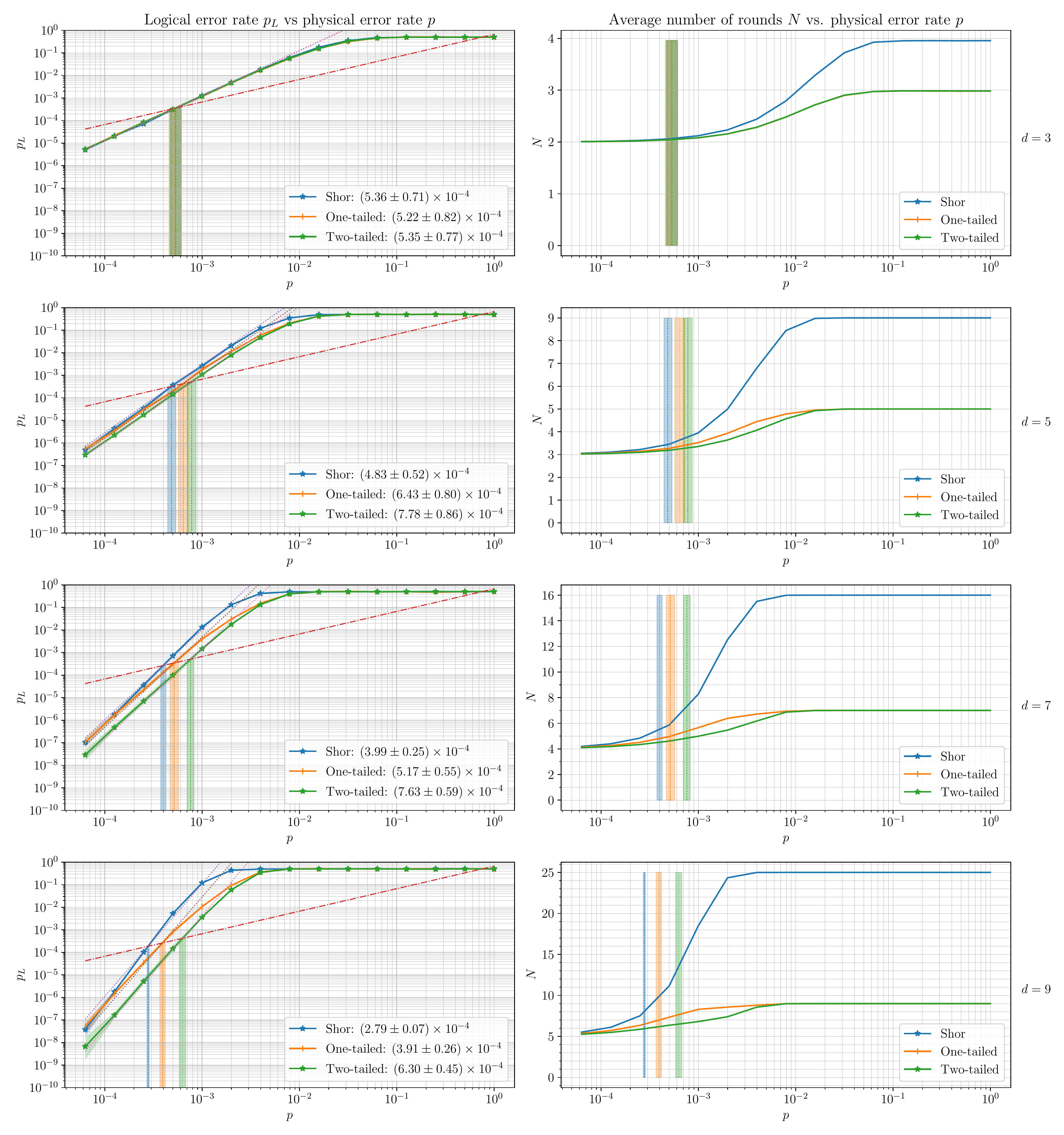}
	\caption{Comparison of one-tailed and two-tailed adaptive time decoders to Shor time decoder for hexagonal color codes of distances 3, 5, 7, and 9. The improvement is increasing with distance, with no improvement at $d=3$ and the biggest one at $d=9$. Here, all decoders use MIM-enhanced space decoding.}
	\label{fig: app sfa}
\end{figure*}

\begin{figure*}[tbph]
	\centering
	\includegraphics[width=1.0\textwidth]{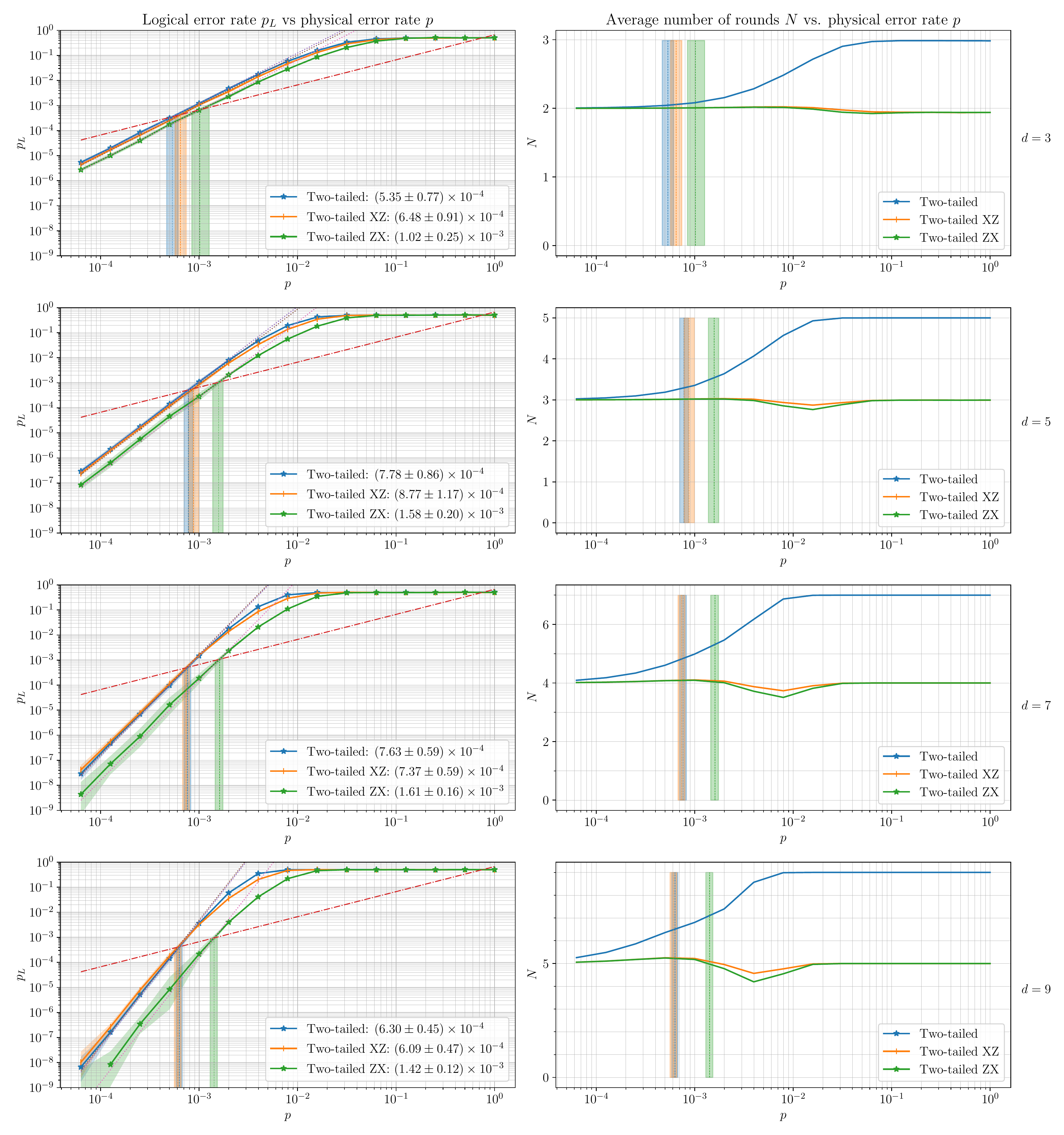}
	\caption{Comparison of the two-tailed time decoder with joint measurements and two-tailed time decoders with XZ and ZX strategies for hexagonal color codes of distances 3, 5, 7, and 9. Here, all decoders use space decoding with MIM.}
	\label{fig: app zx}
\end{figure*}

\end{document}